\theoremstyle{plain}  \newtheorem{prop}{Proposition}}
\theoremstyle{plain}  \newtheorem{theo}{Theorem}}
\theoremstyle{plain}  \newtheorem{corol}{Corollary}}
\theoremstyle{plain}  }
\newcommand{\hy}[2]{\hyperlink{#1}{\color{black} #2}}
\begin{document}
\title{Determining stationary-state quantum properties directly from system-environment interactions}
\author{F. Nicacio}
\email{nicacio@if.ufrj.br} 
\affiliation{Instituto de F\'isica, Universidade Federal do Rio de Janeiro, 
             21941-972, RJ, Brazil}
\author{M. Paternostro}
\affiliation{School of Mathematics and Physics, 
             Queen's University, Belfast BT7 1NN, UK}
             \author{A. Ferraro}
\affiliation{School of Mathematics and Physics, 
             Queen's University, Belfast BT7 1NN, UK}
\begin{abstract}
\noindent %
Considering stationary states of continuous-variable systems undergoing an open dynamics, 
we unveil the connection between properties and symmetries of the latter and the 
dynamical parameters. 
In particular, we explore the relation between the Lyapunov equation for dynamical 
systems and the steady-state solutions of a time-independent Lindblad master equation 
for bosonic modes. 
Exploiting bona-fide relations that characterize some genuine quantum properties 
(entanglement, classicality, and steerability), we obtain conditions on 
the dynamical parameters for which the system is driven to a steady-state possessing 
such properties.
We also develop a method to capture the symmetries of a steady-state based on 
symmetries of the Lyapunov equation.  
All the results and examples can be useful for steady-state engineering processes. 
\end{abstract}

\maketitle

The manipulation of the environment affecting the dynamics of a quantum system, 
with the aim of driving the latter towards a specific state, 
embodies a valuable tool for quantum state engineering.
Depending on assumptions about the couplings, the open dynamics can 
lead to either an equilibrium state or to a dynamical steady-state. 
On the other hand, in this scenario, it is critical to ensure that the 
desired state is achieved regardless the fluctuations in the initial state of the system. 
Protocols of this sort are known as reservoir engineering, 
stabilization, and design~\cite{poyatos,wiseman,ticozzi,albert}.  

A standard approach to the modeling of the evolution of an open system 
is the Lindblad master equation (\hypertarget{LME}{LME}) for the density operator 
$\hat \rho$ \cite{wiseman,lindblad,breuer}: 
\begin{equation}                                                                          \label{lindblad}
\frac{d \hat\rho }{d t}  = 
         - \frac{i}{\hbar } [\hat H,  \hat\rho ] 
         - \frac{1}{2\hbar} \! \sum_{m = 1}^{M}
             ( \{ \hat{L}_m^\dag \hat{L}_m , \hat\rho  \}
               - 2 \hat{L}_m \hat\rho \hat{L}_m^\dag      ),
\end{equation} 
which, besides the unitary dynamics ruled by the Hamiltonian operator $\hat H$, accounts 
for a nonunitary dynamics as resulting from the weak 
coupling ({\it via} the operators $\hat L_k$) to uncontrollable environmental degrees 
of freedom. 
The \hy{LME}{LME} is the most general type 
of Markovian and time-homogeneous master equation guaranteeing trace preservation and 
complete positivity. 
Despite the fundamental and very restrictive Markovianity assumption, 
the \hy{LME}{LME} is crucial for the description of an ample set of dynamics in quantum 
optics and information, mesoscopic physics, 
and quantum chemistry~\cite{wiseman,albert2,breuer}.   

In this work we investigate properties and symmetries of continuous-variable states 
driven to equilibrium by a linear evolution 
governed by the time-independent Lindblad dynamics.  
Gaussian states, 
which play a preponderant role in quantum information science and 
are the natural candidates for the implementation of quantum computation with 
continuous variables \cite{lloyd}, belong to such set of states.   

From the mathematical point of view, 
the problem of whether a linear \hy{LME}{LME} has a stable steady-state is equivalent 
to the solution of a Lyapunov equation for the covariance matrix of the quantum state. 
The methodology used to solve Lyapunov equations,   
known as Lyapunov stability theory \cite{dullerud},  
was developed in Ref.~\cite{lyapunov} in the context of dynamical systems.  
This formalism was explored in Ref.~\cite{koga} to determine conditions for a state to be 
pure in the stationary regime. 

In our work, we make use of the connection between the \hy{LME}{LME} and 
the Lyapunov theory to determine several properties of 
continuous-variables steady-states, such as 
classicality \cite{englert}, 
separability \cite{simon2} 
(or bound entanglement \cite{werner}),  
and steerability \cite{wiseman2}.  
Further, we also explore the steady-state symmetries induced by 
the dynamical symmetries of the Lyapunov equation. 
This is particularly interesting, because it is in general hard to characterize 
the symmetries of the states working directly on the \hy{LME}{LME} (\ref{lindblad}). 
This task becomes instead fully manageable when  dealing with finite matrices.   
Our results are applicable to systems with a 
generic number of degrees of freedom and their analyticity brings in turn 
robustness for numerical examinations of the mentioned properties.

The remaining of this paper is organized as follows: 
In Sec.\ref{ldsc}, we set the notation and  describe the linear dynamics, 
discussing the connection between the \hy{LME}{LME} and the Lyapunov theory. 
The mathematical results concerning the Lyapunov equation 
are developed in Sec.\ref{le}, which will be extensively applied to find 
general properties of stationary solutions in Sec.\ref{ss}.
Symmetries of the system are analyzed in Sec.\ref{ess}, while 
examples are given in Sec.\ref{eI}. 
A method for engineering steady-states is presented in Sec.\ref{eps}.  
Section~\ref{conc} presents our conclusions, while in the \hyperlink{Appendix}{Appendixes}
we further discuss some technical aspects of the mathematical approach, including a 
brief summary of the notation.
%
\section{Linear Dynamics and Stationary Conditions}\label{ldsc} 
%
For a system of $n$ continuous degrees of freedom (\hypertarget{DF}{DF}), 
the generalized coordinates together with the canonical conjugated momenta are collected 
in a $2n$ column vector:
\begin{equation}                                                                         \label{vecx}
\hat x := (\hat q_1,...,\hat q_n, \hat p_1,...\hat p_n)^\dag.
\end{equation}
In this notation, the canonical commutation relation (\hypertarget{CCR}{CCR}) 
is written compactly as $[\hat x_j , \hat x_k ] = i \hbar \, \mathsf J_{jk}$ 
with ${\sf J}_{jk}$ given by the elements of the symplectic matrix 
\begin{equation}                                                                         \label{comm}
 \mathsf J := 
\left(  
\begin{array}{cc} 
       {\bf 0}_n   & \mathsf I_n  \\
      -\mathsf I_n & {\bf 0}_n 
       \end{array}
 \right).
\end{equation}

We consider the evolution of a quantum state governed by the \hy{LME}{LME} with  
a quadratic Hamiltonian and linear Lindblad operators, {\it viz.},   
\begin{equation}                                                                         \label{ham-lind}
\begin{aligned}
&\hat H = \frac{1}{2} \hat x \cdot \mathbf H \hat x + 
\xi \cdot \mathsf J \hat x + H_0,\\
&\hat L_m = \lambda_m  \cdot \mathsf J \hat x + \mu_m,
\end{aligned}
\end{equation}
where $\xi \in \mathbb R^{2n}$ and $\lambda_m \in \mathbb C^{2n}$ are column vectors; 
$H_0 \in \mathbb R$ and $\mu_m \in \mathbb C$ are constants and $m = 1,...,M$.
The Hessian of the Hamiltonian is symmetric by definition: 
$\mathbf H = \mathbf H^\top \in {\rm Mat}(2n,\mathbb R)$.
Under such conditions, the evolution of the mean value vector 
$\langle \hat x \rangle_t  :=  {\rm Tr}\left[ \hat x \hat\rho (t) \right]
                               \in \mathbb R^{2n}$ 
can be retrieved from (\ref{lindblad}) by using only the 
\hy{CCR}{CCR} \cite{nicacio,wiseman}:
\begin{equation}                                                                         \label{mveq}
\frac{d\langle \hat x \rangle_t}{d t}  = 
\xi - \eta + { \bf \Gamma } \langle \hat x \rangle_t ,
\end{equation}
where we have introduced 
$\eta := \sum_{m = 1}^M {\rm Im}(\mu_m^\ast \, \lambda_m) \in {\mathbb C}^{2n}$, 
\begin{equation}                                                                         \label{gammamat}
{\bf \Gamma }  := \mathsf J \mathbf H - {\rm Im} {\bf \Upsilon} \mathsf J 
                 \in {\rm Mat}(2n,\mathbb R),   
\end{equation}
and 
\begin{equation}                                                                         \label{upsilon}
{\bm \Upsilon} := 
\sum_{m = 1}^M \lambda_m \lambda_m^\dagger \in {\rm Mat}(2n,\mathbb C). 
\end{equation}
 
The natural question that arises at this point is whether 
a solution of (\ref{mveq}) attains a finite asymptotic value when $t \to \infty$.  
The answer is provided in the context of the Lyapunov stability theory \cite{dullerud}. 
All solutions will be driven to an asymptotic point, 
if the matrix $\bf \Gamma$ is {\it asymptotically stable} (\hypertarget{AS}{AS}), 
{\it i.e.}, if its spectrum has negative real part. 

Interestingly enough, from the Lindblad dynamics (\ref{lindblad}) 
with the operators in Eq.(\ref{ham-lind}), 
a Lyapunov equation (\hypertarget{LE}{LE}) emerges naturally for the stationary value of 
the covariance matrix (\hypertarget{CM}{CM}) of the system, as we shall see. 
Defining the \hy{CM}{CM} of the state $\hat \rho$ as  
${\bf V} = {\bf V}^\top \in {\rm Mat}(2n,\mathbb R) $ with elements \cite{note1}
\begin{equation}                                                                         \label{cmdef}
\mathbf V_{\! jk} =  
\tfrac{1}{\hbar} {\rm Tr}\left[ 
                         \left\{ \hat x_j - \langle \hat x_j \rangle_t , 
                         \hat x_k - \langle \hat x_k \rangle_t \right\}
                         \hat\rho(t)
                     \right],  
\end{equation}
and calculating its evolution \cite{nicacio,wiseman}, 
the (possible) stationary value of the \hy{CM}{CM} will be the solution of the 
\hy{LE}{LE} 
\begin{equation}                                                                         \label{lyapeq2}
{\bf V}{\bf \Gamma}^{\! \top} + {\bf \Gamma} {\bf V} + {\bf D} = 0, 
\end{equation}
with $\bf \Gamma$ in (\ref{gammamat}) and  
\begin{equation}                                                                         \label{dmat}
{\bf D} :=  2 \, {\rm Re}{\bf \Upsilon} = {\bf D}^{\!\top} \in {\rm Mat}(2n,\mathbb R),    
\end{equation}
which is positive semidefinite, ${\bf D} \ge 0$, by the definition of $\bf \Upsilon$.   
The Lyapunov theorem and its extensions \cite{lyapunov,horn,dullerud}
guarantee that for an \hy{AS}{AS} matrix $\bf \Gamma$, 
the solution of Eq.(\ref{lyapeq2}) exists and is unique. 
Furthermore, those theorems also relate the stability nature of the matrix $\bf \Gamma$ 
to the existence of matrices (in our case $\bf V$ and $\bf D$) 
satisfying the \hy{LE}{LE} (\ref{lyapeq2}). 

We stress that, in order to deduce Eqs.(\ref{mveq}) and (\ref{lyapeq2}),
we do not need any assumptions about the initial state of the system. 
The derivation of such equations only uses the \hy{LME}{LME}, 
the particular structure of Eqs.~(\ref{ham-lind}), and the \hy{CCR}{CCR}.
Meanwhile, the \hy{LME}{LME} with the operators (\ref{ham-lind}) will always preserve the 
Gaussian character of an initial Gaussian state \cite{nicacio}. 
Once the \hy{CM}{CM} of a steady-state of the system is a solution of (\ref{lyapeq2}), 
which is unique and does not depend on the initial state, 
any initial state will end in a Gaussian steady-state. 

\section{Lyapunov Equations}\label{le}                          
In this section we show and develop results concerning the generic Lyapunov equation
\begin{equation}                                                                         \label{lyapeq}
{\bf A}\, {\bf P} + 
{\bf P} {\bf A}^{\!\dagger} + {\bf Q} = 0,   
\end{equation}
and its solution. 
Since our objective is to understand properties of stationary solutions 
of the \hy{LME}{LME}, it is convenient to assume that 
({\it i})   ${\bf A} \in {\rm Mat}(m,\mathbb C)$ is \hy{AS}{AS},    
({\it ii})  ${\bf P} = {\bf P}^\dagger \in {\rm Mat}(m,\mathbb C)$ and 
({\it iii}) ${\bf Q} = {\bf Q}^\dagger \in {\rm Mat}(m,\mathbb C)$. 
From now on, the \hy{LE}{LE} (\ref{lyapeq}) will be represented by the triple 
$\lfloor{\bf P},{\bf A},{\bf Q}\rceil$.   

The first of those assumptions (${\bf A}$ is \hy{AS}{AS}) 
is enough to prove \cite{horn,dullerud} 
that the {\it unique} solution for the \hy{LE}{LE} in (\ref{lyapeq}) is written as 
\begin{equation}                                                                         \label{sol}
{\bf P}({\bf Q},{\bf A}) = \int_{0}^{\infty} \!\!\, dt 
             \,\, 
             {\rm e }^{{\bf A} t } \, 
             {\bf Q}  \,\, 
             {\rm e }^{{\bf A}^{\!\dagger} t } \, . 
\end{equation}
Furthermore, for any  
${\bf A} \in {\rm GL}(m,\mathbb C)$, it is true that
\begin{equation}
{\rm In} \left({\rm e }^{{\bf A}  t } \, 
{\bf Q}  \,\, 
{\rm e }^{{\bf A}^{\!\dagger} t }\right) = 
{\rm In}\left( {\bf Q}  \right), 
\end{equation}
because the expression inside the parenthesis on the LHS is a 
congruence transformation of $\bf Q$. 
The symbol ``$\rm In$'' refers to the {\it inertia index} of a matrix, 
as defined in \hyperlink{Appendix}{Appendix I}.   
On the other hand, once ${\bf A}$ is \hy{AS}{AS}, then 
$\lim_{t \to \infty} {\rm e }^{{\bf A} t } = 0$, 
which guarantees the convergence of the integral in (\ref{sol}). 
These last arguments about the structure of Eq.(\ref{sol}) 
are used in the proof of the following result \cite{dullerud,horn}:
%
\begin{prop}                                                                            \label{prop:inert1}
Consider the solution (\ref{sol}) 
for the \hy{LE}{LE} in (\ref{lyapeq}). 
If ${\bf Q} \ge 0$ (resp. $ {\bf Q} > 0$),  then 
${\bf P} \ge 0$ (resp. ${\bf P} > 0$) .
\end{prop}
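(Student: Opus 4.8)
The plan is to exploit the explicit integral representation (\ref{sol}) and reduce the matrix inequality to a pointwise statement about a scalar integrand. First I would fix an arbitrary vector $v \in \mathbb{C}^m$ and evaluate the Hermitian form $v^\dagger {\bf P} v$ by passing $v$ under the integral sign; this is legitimate because the integral converges, as already noted following (\ref{sol}) via $\lim_{t\to\infty} {\rm e}^{{\bf A}t} = 0$. Introducing $w(t) := {\rm e}^{{\bf A}^\dagger t} v$, the integrand becomes $w(t)^\dagger {\bf Q}\, w(t)$, so that
\begin{equation}
v^\dagger {\bf P} v = \int_0^\infty \! dt \,\, w(t)^\dagger {\bf Q}\, w(t),
\end{equation}
which is manifestly real since ${\bf Q} = {\bf Q}^\dagger$.

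For the semidefinite case ${\bf Q} \ge 0$, each value $w(t)^\dagger {\bf Q}\, w(t)$ is nonnegative, so the integral is nonnegative for every choice of $v$; this is precisely the statement ${\bf P} \ge 0$. For the strict case ${\bf Q} > 0$ I would additionally argue that the integrand is strictly positive on a set of positive measure. Since ${\bf A}$ is \hy{AS}{AS} its spectrum avoids the origin, hence ${\bf A}$ is nonsingular, and the matrix exponential ${\rm e}^{{\bf A}^\dagger t}$ is invertible for all $t$ (with inverse ${\rm e}^{-{\bf A}^\dagger t}$). Consequently $w(t) \ne 0$ whenever $v \ne 0$, and positivity of ${\bf Q}$ forces $w(t)^\dagger {\bf Q}\, w(t) > 0$ for all $t \ge 0$. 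Continuity of the integrand, together with its strictly positive value $v^\dagger {\bf Q}\, v > 0$ at $t = 0$, guarantees a strictly positive integral, so $v^\dagger {\bf P} v > 0$ for every $v \ne 0$, i.e. ${\bf P} > 0$.

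The only place requiring any care is the strict case, where one must justify that a continuous, pointwise-positive integrand produces a strictly positive integral rather than possibly vanishing; this follows at once from continuity near $t=0$, so no genuine obstacle arises. The single structural input beyond the integral formula is the invertibility of ${\rm e}^{{\bf A}^\dagger t}$, which propagates the ``nonzero'' property of $v$ along the flow and is exactly what upgrades the nonstrict conclusion to the strict one.
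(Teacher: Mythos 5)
Your proof is correct and follows essentially the same route the paper sketches: the key fact that the integrand ${\rm e}^{{\bf A}t}{\bf Q}\,{\rm e}^{{\bf A}^\dagger t}$ is a congruence of ${\bf Q}$ (hence inherits its positivity, strictly so because ${\rm e}^{{\bf A}^\dagger t}$ is invertible), combined with convergence of the integral from the \hy{AS}{AS} property, is exactly what your quadratic-form computation $v^\dagger {\bf P} v = \int_0^\infty w(t)^\dagger {\bf Q}\, w(t)\, dt$ expresses pointwise. One cosmetic remark: invertibility of ${\rm e}^{{\bf A}^\dagger t}$ holds for \emph{any} square matrix (its inverse is always ${\rm e}^{-{\bf A}^\dagger t}$), so your appeal to nonsingularity of ${\bf A}$ is superfluous, though harmless.
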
 %
%
\noindent Note that Prop.\ref{prop:inert1} does not exclude the statement  
${\bf Q} \ge 0 \Longrightarrow {\bf P} > 0$, since the set of matrices ${\bf P}$ 
such that $ {\bf P} > 0 $ is a subset of $ {\bf P} \ge 0 $, 
{\it cf.} \hyperlink{Appendix}{Appendix I}.  
This is the case provided the pair $({\bf Q},{\bf A})$ 
is observable \cite{horn,dullerud}. 
For our purposes, this statement is not necessary, 
however it is for the results in \cite{koga}. 

Now, let us go a bit further with the results in Prop.\ref{prop:inert1},  
specializing the properties of the \hy{AS}{AS} matrix $\bf A$:  
\begin{prop}                                                                             \label{prop:inert2}
Consider the \hy{LE}{LE} 
$\lfloor{\bf P},{\bf A},{\bf Q}\rceil$ with
${\bf A} = {\bf A}^{\!\dagger}$, 
then 
${\bf P} \ge 0$ (resp. ${\bf P} > 0$)  
if and only if 
${\bf Q} \ge 0$ (resp. ${\bf Q} > 0$).
\end{prop}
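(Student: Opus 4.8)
\noindent The plan is to split the biconditional into its two implications and to treat them asymmetrically, since only one of them carries genuine content. The implication ${\bf Q}\ge 0\Rightarrow{\bf P}\ge 0$ (and its strict counterpart) is already supplied by Prop.\ref{prop:inert1} applied to the solution (\ref{sol}), and it does not even invoke the extra hypothesis ${\bf A}={\bf A}^{\dagger}$; so nothing new has to be done there. The entire weight of the statement thus falls on the converse, ${\bf P}\ge 0\Rightarrow{\bf Q}\ge 0$, and it is here that the Hermiticity of ${\bf A}$ must be put to work.

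For the converse I would first read ${\bf Q}$ off algebraically from the \hy{LE}{LE} (\ref{lyapeq}): because ${\bf A}={\bf A}^{\dagger}$, the equation reads ${\bf Q}=-({\bf A}{\bf P}+{\bf P}{\bf A})$. Since ${\bf A}$ is simultaneously Hermitian and \hy{AS}{AS} it is negative definite, so I would diagonalise it unitarily as ${\bf A}={\bf U}{\bf \Lambda}{\bf U}^{\dagger}$ with ${\bf \Lambda}={\rm diag}(\lambda_1,\dots,\lambda_m)$ and all $\lambda_j<0$. Passing to $\tilde{\bf P}={\bf U}^{\dagger}{\bf P}{\bf U}$ and $\tilde{\bf Q}={\bf U}^{\dagger}{\bf Q}{\bf U}$ is a unitary congruence, hence inertia-preserving, so that ${\bf P}\ge 0\Leftrightarrow\tilde{\bf P}\ge 0$ and likewise for ${\bf Q}$. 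In this basis the equation decouples entrywise into $\tilde Q_{jk}=(|\lambda_j|+|\lambda_k|)\,\tilde P_{jk}$, equivalently $\tilde P_{jk}=\tilde Q_{jk}/(|\lambda_j|+|\lambda_k|)$, which is exactly what the solution (\ref{sol}) yields once ${\bf A}^{\dagger}={\bf A}$. The forward map $\tilde{\bf Q}\mapsto\tilde{\bf P}$ is therefore the Hadamard product with the Cauchy-type kernel ${\bf C}=(1/(|\lambda_j|+|\lambda_k|))_{jk}$, which is positive semidefinite because it is the Gram matrix of the functions $t\mapsto e^{-|\lambda_j|t}$ in $L^2(0,\infty)$; the Schur product theorem then re-derives the easy direction, and the converse becomes the problem of inverting this Hadamard product.

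The inversion is the step I expect to be the main obstacle. Writing the converse map as $\tilde{\bf Q}={\bf K}\circ\tilde{\bf P}$ with the entrywise-reciprocal kernel ${\bf K}=(|\lambda_j|+|\lambda_k|)_{jk}$, a Schur-product argument would finish the proof provided ${\bf K}\ge 0$. This is the delicate point: ${\bf K}={\bf a}{\bf 1}^{\top}+{\bf 1}{\bf a}^{\top}$ with ${\bf a}=(|\lambda_1|,\dots,|\lambda_m|)^{\top}$ has rank at most two and is \emph{not} positive semidefinite as soon as $m\ge 2$ with the $\lambda_j$ not all equal, so positivity of $\tilde{\bf P}$ alone does not propagate through the Hadamard product. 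The converse therefore cannot be closed by the naive Schur route and is the real crux of the statement: one has to bring in structure beyond ${\bf P}\ge 0$. A transparent instance where it \emph{does} close is the commuting case $[{\bf A},{\bf P}]=0$, in which ${\bf Q}=-2{\bf A}{\bf P}=2(-{\bf A})^{1/2}{\bf P}(-{\bf A})^{1/2}\ge 0$ is immediate. I would accordingly direct all the effort to identifying the additional input (commutation, or a suitable sharpening of the hypotheses) that upgrades this special case to the general claim, leaving the forward implication entirely to Prop.\ref{prop:inert1}.
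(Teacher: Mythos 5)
Your treatment of the forward implication is correct (and coincides with the paper's, which likewise just invokes Prop.~\ref{prop:inert1}), but the key point is your diagnosis of the converse: the obstruction you found is real, and it is fatal not to your attempt but to the proposition itself. Your Hadamard-product framework already contains a counterexample --- you only needed to evaluate it. Take $\tilde{\bf P}$ to be the all-ones matrix (positive semidefinite, rank one); then $\tilde{\bf Q}={\bf K}\circ\tilde{\bf P}={\bf K}$, which you correctly showed is indefinite whenever the $\lambda_j$ are not all equal. Concretely, for
\begin{equation*}
{\bf A}=\begin{pmatrix} -1 & 0\\ 0 & -10 \end{pmatrix},
\qquad
{\bf P}=\begin{pmatrix} 1 & 1\\ 1 & 1 \end{pmatrix}\ \ge\ 0,
\end{equation*}
the matrix ${\bf Q}=-({\bf A}{\bf P}+{\bf P}{\bf A})=\bigl(\begin{smallmatrix} 2 & 11\\ 11 & 20\end{smallmatrix}\bigr)$ satisfies the Lyapunov equation (\ref{lyapeq}) (and ${\bf P}$ is its unique solution, since ${\bf A}$ is asymptotically stable), yet $\det{\bf Q}=-81<0$: so ${\bf P}\ge0$ while ${\bf Q}$ is indefinite. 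Perturbing to $P_{12}=P_{21}=0.9$ gives ${\bf P}>0$ with ${\bf Q}$ still indefinite, so the strict version fails as well. Hence the backward implication cannot be proved under the stated hypotheses; it genuinely needs the extra structure you isolated, e.g. $[{\bf A},{\bf P}]=0$, or all eigenvalues of ${\bf A}$ equal (${\bf A}\propto-\mathsf I$, for which ${\bf K}$ is a positive multiple of the all-ones matrix and the Schur route closes).

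For comparison, the paper's own proof fails at exactly the step your analysis flags. It writes ${\rm Spec}_{\mathbb R}({\bf Q})={\rm Spec}_{\mathbb R}\bigl[\sqrt{-{\bf A}}\,({\bf P}+{\bf A}{\bf P}{\bf A}^{-1})\,\sqrt{-{\bf A}}\bigr]$ --- a correct use of the cyclic invariance of the spectrum --- and then argues that the bracket is a sum of positive semidefinite matrices, subsequently subjected to a congruence. But ${\bf A}{\bf P}{\bf A}^{-1}$ is a \emph{similarity}, not a congruence, of ${\bf P}$: it is not Hermitian unless $[{\bf A},{\bf P}]=0$, and a sum of non-Hermitian matrices with nonnegative spectra need not have nonnegative spectrum (the example above realizes this failure). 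In your language, the paper implicitly asserts ${\bf K}\circ\tilde{\bf P}\ge0$ for every $\tilde{\bf P}\ge0$, which is precisely the statement you refuted via the indefiniteness of ${\bf K}$. So your proposal, although incomplete as a proof, is the mathematically sound part of the story: only the direction covered by Prop.~\ref{prop:inert1} survives without additional hypotheses. Note that this also affects what is built on Prop.~\ref{prop:inert2}: the ``only if'' halves of Corol.~\ref{cor:iff} and of the criteria (\ref{clas2}) and (\ref{seprel:2}) inherit the same gap, whereas all purely sufficient conditions descending from Corol.~\ref{cor:if} remain valid.
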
 
\begin{proof}[Proof]
Since ${\bf A}$ is self-adjoint and negative 
definite (\hy{AS}{AS}), it is possible to write  
$ {\bf A} = - \sqrt{-{\bf A}} 
              \sqrt{-{\bf A}}$, 
where $\sqrt{-{\bf A}}$ is the unique self-adjoint 
positive definite square root of $-{\bf A}$.
From the \hy{LE}{LE}~(\ref{lyapeq}), 
\begin{eqnarray}
 \!\!\!\!\! {\rm Spec}_{\mathbb R}( {\bf Q}) & = & 
- {\rm Spec}_{\mathbb R}
  \left( 
        {\bf A}{\bf P} + 
        {\bf P} {\bf A} 
\right)                                      \\
&=& - {\rm Spec}_{\mathbb R} 
\left[
      \left( {\bf P} + 
            {\bf A} {\bf P} {\bf A}^{-1}
     \right) {\bf A}
\right]                                      \nonumber \\
& = & {\rm Spec}_{\mathbb R}
\left[ \sqrt{-{\bf A}}
       \left( {\bf P} + 
            {\bf A} {\bf P} {\bf A}^{-1}
     \right)
     \sqrt{-{\bf A}}
\right]\!.                                   \nonumber
\end{eqnarray}
Since the sum of positive semidefinite 
(resp. positive definite) 
matrices is positive semidefinite 
(resp. positive definite), 
and since a congruence transformation does not change 
the signs of the eigenvalues 
[or the inertia of the matrix], 
it follows that 
${\bf Q} \ge 0$ (resp. ${\bf Q} > 0$) if 
$ {\bf P} \ge 0$ (resp. ${\bf P} > 0$), 
which proves the necessary condition. 
The sufficiency is in Prop.\ref{prop:inert1}.  
\end{proof}
%
\noindent Note that, once one statement in Prop.\ref{prop:inert2} is 
${\bf Q} > 0 \Longleftrightarrow {\bf P} > 0$, then 
the statement 
${\bf Q} \ge 0 \Longleftrightarrow {\bf P} \ge 0$ necessarily means that 
if $\bf Q$ has one null eigenvalue, then $\bf P$ will also have, 
and {\it vice-versa}.  

In the direction of the main task of this work, 
we must develop some results concerning matrices of the form ${\bf P} + {\bf \Xi}$, 
where ${\bf P}$ is the solution in Eq.(\ref{sol}) of 
$\lfloor{\bf P},{\bf A},{\bf Q}\rceil$ and 
${\bf \Xi} = {\bf \Xi}^\dagger \in {\rm Mat}(m,\mathbb C)$.  
\begin{corol}                                                                              \label{cor:if} 
${\bf P} + {\bf \Xi} \ge 0 $ \, if \, 
${\bf Q}_{[\bf \Xi]} := 
{\bf Q} - {\bf \Xi A}^\dagger - {\bf A \Xi} \ge 0 $.
\end{corol}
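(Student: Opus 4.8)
The plan is to recognize that the shifted matrix ${\bf P} + {\bf \Xi}$ is itself the solution of a Lyapunov equation driven by the \emph{same} asymptotically stable matrix ${\bf A}$, and then to invoke Prop.~\ref{prop:inert1} without further work. The whole corollary is really a change of variables on the inhomogeneous term.

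First I would take the defining equation ${\bf A}{\bf P} + {\bf P}{\bf A}^\dagger + {\bf Q} = 0$ and add ${\bf A}{\bf \Xi} + {\bf \Xi}{\bf A}^\dagger$ to its left-hand side. Collecting terms gives
\begin{equation}
{\bf A}({\bf P} + {\bf \Xi}) + ({\bf P} + {\bf \Xi}){\bf A}^\dagger + {\bf Q}_{[\bf \Xi]} = 0,
\end{equation}
so that ${\bf P} + {\bf \Xi}$ solves the Lyapunov equation $\lfloor {\bf P} + {\bf \Xi}, {\bf A}, {\bf Q}_{[\bf \Xi]} \rceil$. Before invoking the earlier results I would verify the two Hermiticity requirements of the triple framework: the matrix ${\bf P} + {\bf \Xi}$ is self-adjoint because both ${\bf P}$ and ${\bf \Xi}$ are, and ${\bf Q}_{[\bf \Xi]}$ is self-adjoint because ${\bf Q} = {\bf Q}^\dagger$ together with ${\bf \Xi} = {\bf \Xi}^\dagger$ forces $({\bf \Xi A}^\dagger + {\bf A \Xi})^\dagger = {\bf A \Xi} + {\bf \Xi A}^\dagger$. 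With the problem now recast as a genuine member of the family $\lfloor \cdot, {\bf A}, \cdot \rceil$, I would close the argument by noting that, since ${\bf A}$ is asymptotically stable, the solution of this Lyapunov equation is unique and represented by the integral (\ref{sol}) with ${\bf Q}$ replaced by ${\bf Q}_{[\bf \Xi]}$. The hypothesis ${\bf Q}_{[\bf \Xi]} \ge 0$ then feeds directly into Prop.~\ref{prop:inert1}, yielding ${\bf P} + {\bf \Xi} \ge 0$.

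There is no genuine obstacle here: the entire content is the algebraic observation that shifting ${\bf P}$ by ${\bf \Xi}$ merely shifts the driving term ${\bf Q}$ by $-({\bf \Xi A}^\dagger + {\bf A \Xi})$ while leaving the stable coefficient ${\bf A}$ untouched. The only point demanding care is that the statement is a one-way implication, supplying a \emph{sufficient} condition; I would not assert the converse, since Prop.~\ref{prop:inert1} is not an equivalence for general ${\bf A}$, and ${\bf P} + {\bf \Xi} \ge 0$ may well hold even when ${\bf Q}_{[\bf \Xi]}$ fails to be positive semidefinite.
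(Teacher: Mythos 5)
Your proof is correct and follows essentially the same route as the paper's: you recast ${\bf P} + {\bf \Xi}$ as the (unique, by asymptotic stability of ${\bf A}$) solution of the equivalent Lyapunov equation $\lfloor {\bf P} + {\bf \Xi}, {\bf A}, {\bf Q}_{[\bf \Xi]} \rceil$, check Hermiticity of ${\bf Q}_{[\bf \Xi]}$, and apply Prop.~\ref{prop:inert1}. The only difference is that you spell out the algebra and the uniqueness step that the paper leaves as ``easy to see,'' and your closing caution that the implication is one-way matches the paper's own remark following the corollary.
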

\begin{proof}[Proof] 
It is easy to see that the \hy{LE}{LE} 
$\lfloor {\bf P} + {\bf \Xi},
         {\bf A},
         {\bf Q}_{[\bf \Xi]}   \rceil$    
is equivalent to the \hy{LE}{LE} in (\ref{lyapeq}). 
Thus, the proof follows from 
Prop.\ref{prop:inert1}, since 
${\bf Q}_{[\bf \Xi]}={\bf Q}_{[\bf \Xi]}^\dagger$. 
\end{proof}
The converse statement of Corol.\ref{cor:if} is not true in general.
By using the restriction for the matrices $\bf A$, 
as in Prop.\ref{prop:inert2}, we obtain the following corollary giving a 
necessary and sufficient condition. 
\begin{corol}                                                                            \label{cor:iff}
Consider ${\bf A} = {\bf A}^{\!\dagger}$ ($\bf A$ 
is \hy{AS}{AS}), 
then  ${\bf P} +  {\bf \Xi}  \ge 0$ \,
if and only if \, 
$\widetilde{\bf Q}_{[{\bf \Xi}]} := {\bf Q}
- {\pmb \{} {\bf \Xi, A} {\pmb \}}_{\!+} \ge 0 $.
\end{corol}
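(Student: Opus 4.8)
The plan is to reduce the claim to a direct application of Prop.\ref{prop:inert2} by exhibiting ${\bf P}+{\bf \Xi}$ as the exact solution of a \emph{genuine} Lyapunov equation whose source is precisely $\widetilde{\bf Q}_{[{\bf \Xi}]}$. The motivating observation is that, once ${\bf A}={\bf A}^\dagger$, the source ${\bf Q}_{[\bf \Xi]}={\bf Q}-{\bf \Xi A}^\dagger-{\bf A \Xi}$ of Corol.\ref{cor:if} collapses into the anticommutator form ${\bf Q}-\{{\bf \Xi},{\bf A}\}_+=\widetilde{\bf Q}_{[{\bf \Xi}]}$; hence the ``if'' direction is already contained in Corol.\ref{cor:if}, and the whole task is to show that the self-adjointness of ${\bf A}$ upgrades that one-way implication to a biconditional.

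First I would verify that the shifted triple $\lfloor {\bf P}+{\bf \Xi},{\bf A},\widetilde{\bf Q}_{[{\bf \Xi}]}\rceil$ is a valid \hy{LE}{LE}. Expanding the left-hand side
\[
{\bf A}({\bf P}+{\bf \Xi})+({\bf P}+{\bf \Xi}){\bf A}^\dagger+\widetilde{\bf Q}_{[{\bf \Xi}]}
\]
and substituting $\widetilde{\bf Q}_{[{\bf \Xi}]}={\bf Q}-({\bf \Xi A}+{\bf A \Xi})$, the group of terms ${\bf A}{\bf P}+{\bf P}{\bf A}^\dagger+{\bf Q}$ vanishes by the original \hy{LE}{LE} (\ref{lyapeq}), leaving ${\bf A}{\bf \Xi}+{\bf \Xi}{\bf A}^\dagger-{\bf \Xi A}-{\bf A \Xi}$. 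At this point I would invoke ${\bf A}={\bf A}^\dagger$ to replace ${\bf \Xi}{\bf A}^\dagger$ by ${\bf \Xi A}$, so that the four remaining terms cancel identically and the shifted equation holds exactly.

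Next I would confirm that the hypotheses of Prop.\ref{prop:inert2} are met for this new triple: the matrix ${\bf A}$ is unchanged (hence self-adjoint and \hy{AS}{AS}), and $\widetilde{\bf Q}_{[{\bf \Xi}]}$ is self-adjoint because ${\bf \Xi}={\bf \Xi}^\dagger$ and ${\bf A}={\bf A}^\dagger$ render the anticommutator $\{{\bf \Xi},{\bf A}\}_+$ self-adjoint. Thus $\lfloor {\bf P}+{\bf \Xi},{\bf A},\widetilde{\bf Q}_{[{\bf \Xi}]}\rceil$ is a Lyapunov equation of exactly the type addressed by Prop.\ref{prop:inert2}.

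Finally I would apply Prop.\ref{prop:inert2} directly to this triple, which immediately yields ${\bf P}+{\bf \Xi}\ge 0$ if and only if $\widetilde{\bf Q}_{[{\bf \Xi}]}\ge 0$, as claimed. I do not expect any serious obstacle here; the single delicate point is the cancellation of the cross terms in the first step, and this is exactly where the extra hypothesis ${\bf A}={\bf A}^\dagger$ (absent in Corol.\ref{cor:if}) enters. That cancellation is what closes the gap between the merely sufficient condition of Corol.\ref{cor:if} and the necessary-and-sufficient statement asserted here.
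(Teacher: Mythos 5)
Your proof is correct and follows essentially the same route as the paper's: construct the equivalent shifted Lyapunov equation $\lfloor {\bf P}+{\bf \Xi},\,{\bf A},\,\widetilde{\bf Q}_{[{\bf \Xi}]}\rceil$ and apply Prop.~\ref{prop:inert2}, your only addition being the explicit cross-term cancellation that the paper's proof leaves implicit (``as before''). One minor point of emphasis: the upgrade from a sufficient to a necessary-and-sufficient condition is delivered by the biconditional in Prop.~\ref{prop:inert2} (which is where ${\bf A}={\bf A}^{\!\dagger}$ is truly essential), rather than by the cancellation itself, though this does not affect the validity of your argument.
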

\begin{proof}[Proof]
As before, we construct the equivalent 
Lyapunov equation
$\lfloor {\bf P} + {\bf \Xi},
         {\bf A},
         \widetilde{\bf Q}_{[\bf \Xi]}   \rceil$,    
and use Prop.\ref{prop:inert2} since
$\widetilde{\bf Q}_{[\bf \Xi]}\!=\! 
 \widetilde{\bf Q}_{[\bf \Xi]}^\dagger$. 
\end{proof}

\noindent Note that, contrary to Props.~\ref{prop:inert1} and~\ref{prop:inert2}, 
we did not mention the strictly positive cases 
(${\bf Q}_{[\bf \Xi]}\!>\!0$, $\widetilde{\bf Q}_{[\bf \Xi]}\!>\!0$) in 
Corols.~{\ref{cor:if}} and~\ref{cor:iff}. 
Actually, these cases follow the same prescription, 
but they are not necessary for our next results. 

In what follows, properties of the steady states, driven to equilibrium under 
the linear evolution generated by (\ref{ham-lind}), will be considered from the 
perspective of the results developed in this section. 
\section{ Bona-Fide Relations and Steady-States }\label{ss}     
Through the temporal evolution of a state by the \hy{LME}{LME} conditioned 
to an \hy{AS}{AS} dynamics, the dependence on the initial condition is progressively 
erased by the environmental action. 
Therefore the steady-state properties must be completely determined only by 
the environment.  
An usual way to describe properties of continuous-variable states is given by 
\textit{bona-fide} relations involving the \hy{CM}{CM} of the states.
From now on, we will assume that the \hy{CM}{CM} of a quantum state evolves with 
$\bf \Gamma $ \hy{AS}{AS} and ${\bf D} \ge 0$ and attains an asymptotic value described 
by the \hy{LE}{LE} (\ref{lyapeq2}) with solution 
${\bf V} := {\bf P}({\bf D},{\bf \Gamma})$ in Eq.(\ref{sol}).

It is convenient to recall the definitions of the auxiliary matrices defined in 
Corols.~{\ref{cor:if}} and~\ref{cor:iff}, but now for the LE in question. 
For any matrix ${\bf \Xi} = {\bf \Xi}^\dagger \in {\rm Mat}(2n,\mathbb C)$, we have  
\begin{equation}
{\bf D}_{[\bf \Xi]} :=  {\bf D} - {\bf \Xi \, \Gamma}{\!^\top} - {\bf \Gamma \, \Xi}, \,\,\,
\widetilde{\bf D}_{[{\bf \Xi}]} := {\bf D} - {\pmb \{} {\bf \Xi,\Gamma} {\pmb \}}_{\!+}. 
\end{equation}
%

\subsection{Uncertainty Principle}\label{uc}                 
Any quantum state is subjected to constrains imposed by the uncertainty principle, 
which is only a consequences of the \hy{CCR}{CCR}. 
For the continuous-variables case, 
this principle takes into account only the \hy{CM}{CM} (\ref{cmdef}). 
A genuine physical state has a \hy{CM}{CM} such that \cite{simon}
\begin{equation}                                                                         \label{up1}
{\bf V} + i \mathsf J_{2n} \ge 0. 
\end{equation}

Given a Hamiltonian and a collection of Lindblad operators 
as in (\ref{ham-lind}), 
what can our corollaries say about the genuineness of the steady-state 
generated by the \hy{LME}{LME}? 
Invoking Corol.\ref{cor:if}, the matrix $\bf V$ of a steady-state is 
a bona-fide \hy{CM}{CM} if 
${\bf D}_{ [i \mathsf J] } = 
{\bf D} - i {\bf \Gamma} \mathsf J
        - i \mathsf J {\bf \Gamma}^{\!\top} \ge 0 $.
However, using Eqs.~(\ref{gammamat}) and~(\ref{upsilon}),  
it is not difficult to show that ${\bf D}_{[i \mathsf J]} = 2 {\bf \Upsilon}^\ast $,
which is always positive semidefinite, according to the definition of 
${\bf \Upsilon}$ in (\ref{upsilon}). 
Tautologically, this says that all linear \hy{LME}{LMEs} with 
$\bf \Gamma$ \hy{AS}{AS} and ${\bf D} \ge 0$ 
will drive the system to a steady-state obeying the relation (\ref{up1}), {\it i.e.}, 
a genuine physical state.

On the other hand, Eq.(\ref{lyapeq2}) is a consequence of the \hy{CCR}{CCR}, 
as mentioned before. 
Accordingly, the \hy{LME}{LME} guarantees that the uncertainty principle holds 
for all times, including the steady-state limit, 
whereof the condition in Corol.{\ref{cor:if}} is necessary and sufficient 
regardless of whether $\bf \Gamma$ is symmetric.  
Before going to the next bona-fide relation, it is important to remark that this 
extension of Corol.{\ref{cor:if}} to an ``if and only if'' condition is only true 
for the relation in Eq.(\ref{up1}). 
For all the other relations which will appear in what follows, the differences between 
Corols.~\ref{cor:if} and~\ref{cor:iff} should be considered. 

\

\subsection{Classical States}\label{cs}    
Classical States are defined as having a positive Glauber-Sudarshan distribution 
function, they are also called P-re\-pre\-sen\-table states. 
This definition relies on the possibility to express a desired state as a classical 
mixture of coherent states.
The necessary and sufficient condition for P-representability  of a Gaussian state 
is written in terms of its \hy{CM}{CM} as the bona-fide relation 
(see \hyperlink{Appendix}{Appendix II})
\begin{equation}                                                                         \label{bfclas}
{\bf V} - \mathsf I_{2n} \ge 0. 
\end{equation}

The evolution that drives the system to a classical steady-state is subjected to the 
sufficient condition given by Corol.\ref{cor:if}:
\begin{equation}                                                                         \label{clas}
{\bf D}_{[-\mathsf I]}=  {\bf \Gamma} + 
                       {\bf \Gamma}^{\!\top} + {\bf D} \ge 0 
\Longrightarrow {\bf V} - \mathsf I_{2n} \ge 0. 
\end{equation}
The contrapositive of the above statement says that 
if a given \hy{LME}{LME} is such that ${\bf D}_{[-\mathsf I]}$ 
has at least one negative eigenvalue, 
it will lead the system to a nonclassical stationary state. 
The matrix ${\bf V}$ is, by hypothesis, the \hy{CM}{CM} of a steady-state of 
the \hy{LME}{LME}. 
Once the converse statement of (\ref{clas}) is not true, one can conclude that
there are classical steady-states which can not be generated by a \hy{LME}{LME} 
with dynamical matrices such that ${\bf D}_{[-\mathsf I]} \ge 0$. 

If we consider only steady-states generated by a \hy{LME}{LME} with ${\bf \Gamma}$ 
symmetric, by Corol.\ref{cor:iff} the possible classical states will obey the 
necessary and sufficient condition
\begin{equation}                                                                         \label{clas2}
\widetilde{\bf D}_{[-\mathsf I]} = 2{\bf \Gamma} + {\bf D} \ge 0 
\Longleftrightarrow {\bf V} - \mathsf I_{2n} \ge 0.  
\end{equation}
This means that all states generated by a \hy{LME}{LME} with 
${\bf \Gamma} = {\bf \Gamma}^{\!\top}$ and $\widetilde{\bf D}_{[-\mathsf I]} \ge 0$ 
are classical states. 
Conversely, all classical states with \hy{CM}{CM} $\bf V$ which are solutions of 
a \hy{LE}{LE} with ${\bf \Gamma} = {\bf \Gamma}^{\!\top}$ are steady-states of a 
\hy{LME}{LME} satisfying $\widetilde{\bf D}_{[-\mathsf I]} \ge 0$. 

Since classicality is related to mixtures of coherent states,  
one instructive example is given by the \hy{CM}{CM} 
${\bf V} = \mathsf I_{2n}$ --- 
{\it i.e.}, the \hy{CM}{CM} of any $n$-mode (or $n$-\hy{DF}{DF}) coherent state.
The simplicity of this case enables us to derive an useful 
necessary and sufficient condition besides the relation in Eq.(\ref{clas2}). 
Actually, we will be concerned with the slightly more general situation: 
the tensor product of $n$ Gibbs-states with the same occupation number. 
These states have the global \hy{CM}{CM} written as ${\bf V} = \alpha \mathsf I_{2n}$, 
which is a \hy{CM}{CM} of a classical state if $\alpha \ge 1$. 
\begin{corol}                                                                             \label{corol:diag} 
For any $\alpha > 0$ and any $\bf \Gamma$ such that 
$({\bf \Gamma} + {\bf \Gamma}^{\!\top}) \le 0$, 
the matrix ${\bf V} = \alpha \, \mathsf I_{2n}$ 
is a solution of the \hy{LE}{LE} 
$\lfloor{\bf V},{\bf \Gamma},{\bf D}\rceil$ 
if and only if 
$ \alpha ( {\bf \Gamma} + {\bf \Gamma}^{\!\top} ) 
+ {\bf D} = 0$.
Furthermore, 
$\alpha = \tfrac{1}{2}{\rm Tr} {\bf D}/
          {\rm Tr}({\rm Im} {\bf \Upsilon} \mathsf J)$.
\end{corol}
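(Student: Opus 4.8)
The plan is to handle the two assertions separately, since both reduce to elementary algebra once the isotropic ansatz is inserted. For the equivalence I would substitute ${\bf V} = \alpha\,\mathsf I_{2n}$ directly into the \hy{LE}{LE} (\ref{lyapeq2}), ${\bf V}{\bf \Gamma}^{\!\top} + {\bf \Gamma}{\bf V} + {\bf D} = 0$. Because the identity commutes with every matrix, the first two terms collapse to $\alpha({\bf \Gamma}^{\!\top} + {\bf \Gamma})$, so the equation holds precisely when $\alpha({\bf \Gamma} + {\bf \Gamma}^{\!\top}) + {\bf D} = 0$. This substitution is an exact equivalence, so neither direction of the ``if and only if'' needs any further work, and in particular neither relies on $({\bf \Gamma} + {\bf \Gamma}^{\!\top}) \le 0$ as an algebraic input.

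I would assign to the hypothesis $({\bf \Gamma} + {\bf \Gamma}^{\!\top}) \le 0$ the role of consistency rather than of driving the equivalence. Since the setup assumes ${\bf D} \ge 0$ and we take $\alpha > 0$, the relation $\alpha({\bf \Gamma} + {\bf \Gamma}^{\!\top}) = -{\bf D} \le 0$ forces $({\bf \Gamma} + {\bf \Gamma}^{\!\top}) \le 0$; stating this as a hypothesis merely records the regime in which ${\bf V} = \alpha\,\mathsf I_{2n}$ can be an admissible \hy{CM}{CM}. I would make this explicit so that the reader sees the condition as necessary for a physical solution, not as an extra constraint on the equivalence itself.

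For the value of $\alpha$, I would take the trace of the solvability condition $\alpha({\bf \Gamma} + {\bf \Gamma}^{\!\top}) + {\bf D} = 0$. Using ${\rm Tr}\,{\bf \Gamma}^{\!\top} = {\rm Tr}\,{\bf \Gamma}$ this gives $2\alpha\,{\rm Tr}\,{\bf \Gamma} + {\rm Tr}\,{\bf D} = 0$, hence $\alpha = -{\rm Tr}\,{\bf D}/(2\,{\rm Tr}\,{\bf \Gamma})$. The one step that is not purely mechanical is evaluating ${\rm Tr}\,{\bf \Gamma}$ from its definition (\ref{gammamat}), ${\bf \Gamma} = \mathsf J{\bf H} - {\rm Im}\,{\bf \Upsilon}\,\mathsf J$: here I would use that $\mathsf J$ is antisymmetric while ${\bf H}$ is symmetric, so ${\rm Tr}(\mathsf J{\bf H}) = 0$ (transpose under the trace and use ${\bf H}^{\!\top} = {\bf H}$, $\mathsf J^{\!\top} = -\mathsf J$ to obtain ${\rm Tr}(\mathsf J{\bf H}) = -{\rm Tr}(\mathsf J{\bf H})$). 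This leaves ${\rm Tr}\,{\bf \Gamma} = -{\rm Tr}({\rm Im}\,{\bf \Upsilon}\,\mathsf J)$, and substituting back yields $\alpha = \tfrac{1}{2}{\rm Tr}\,{\bf D}/{\rm Tr}({\rm Im}\,{\bf \Upsilon}\,\mathsf J)$, as claimed.

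The only genuine subtlety I anticipate is well-definedness of this quotient, which presupposes ${\rm Tr}({\rm Im}\,{\bf \Upsilon}\,\mathsf J) \neq 0$, equivalently ${\rm Tr}\,{\bf \Gamma} \neq 0$. I would note that $({\bf \Gamma} + {\bf \Gamma}^{\!\top}) \le 0$ already forces ${\rm Tr}\,{\bf \Gamma} \le 0$, so the denominator has a definite sign and $\alpha \ge 0$ is automatic. The degenerate case ${\rm Tr}\,{\bf \Gamma} = 0$ means a negative semidefinite matrix of vanishing trace, hence ${\bf \Gamma} + {\bf \Gamma}^{\!\top} = 0$, which together with the solvability condition gives ${\bf D} = 0$; in that situation every $\alpha$ solves the equation and no unique value is singled out. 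Beyond flagging this non-degeneracy, I expect no real obstacle: the whole corollary follows from the scalar ansatz together with the vanishing of ${\rm Tr}(\mathsf J{\bf H})$.
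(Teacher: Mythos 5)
Your proof is correct, but it takes a genuinely more elementary route than the paper's for half of the statement. You treat both directions of the equivalence as a single tautology: substituting ${\bf V}=\alpha\,\mathsf I_{2n}$ into the \hy{LE}{LE} (\ref{lyapeq2}) collapses it to $\alpha({\bf \Gamma}+{\bf \Gamma}^{\!\top})+{\bf D}=0$, so ``is a solution'' and ``the condition holds'' are the same assertion. The paper uses that substitution only for one direction; for the other it explicitly evaluates the integral representation (\ref{sol}): it defines $\mathcal I=\int_0^\infty \! dt\, {\rm e}^{{\bf \Gamma}t}\,{\rm e}^{{\bf \Gamma}^{\!\top}t}$, integrates by parts to get ${\bf \Gamma}\,\mathcal I+\mathcal I\,{\bf \Gamma}^{\!\top}=-\mathsf I_{2n}$, and concludes ${\bf P}\bigl(-\alpha({\bf \Gamma}+{\bf \Gamma}^{\!\top}),{\bf \Gamma}\bigr)=\alpha\,\mathsf I_{2n}$. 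What each buys: your argument makes transparent that neither $({\bf \Gamma}+{\bf \Gamma}^{\!\top})\le 0$ nor asymptotic stability enters the algebraic equivalence (you correctly demote the semidefiniteness hypothesis to the consistency requirement ${\bf D}\ge 0$), whereas the paper's computation identifies $\alpha\,\mathsf I_{2n}$ directly with the integral formula --- i.e., with the covariance matrix the dynamics actually converges to --- without needing the uniqueness theorem to pass from ``satisfies the algebraic equation'' to ``equals the steady-state solution (\ref{sol})''. Since the section's standing assumption is that ${\bf \Gamma}$ is \hy{AS}{AS}, uniqueness is available and your shortcut is legitimate. The trace computation for $\alpha$ is identical in both proofs, resting on ${\rm Tr}(\mathsf J{\bf H})=0$. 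The one minor divergence is the nonvanishing of the denominator: the paper gets ${\rm Tr}\,{\bf \Gamma}\ne 0$ strictly from asymptotic stability (a real \hy{AS}{AS} matrix has real, negative trace), which excludes your degenerate case ${\bf \Gamma}+{\bf \Gamma}^{\!\top}=0$ outright, since a real matrix with ${\bf \Gamma}^{\!\top}=-{\bf \Gamma}$ has purely imaginary spectrum and cannot be \hy{AS}{AS}; you instead argue from $({\bf \Gamma}+{\bf \Gamma}^{\!\top})\le 0$ alone and handle that case separately, which is sound but slightly weaker than what the standing assumptions give you for free.
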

\begin{proof}[Proof]
The sufficient condition is trivially obtained by
constructing the \hy{LE}{LE} 
$\lfloor \alpha \mathsf I_{2n},
         {\bf \Gamma}, {\bf D} \rceil$
from (\ref{lyapeq2}). 
To prove the necessary condition, one defines 
\begin{equation*}
\mathcal I := \int^\infty_0 \!\!\! dt\,  
             {\rm e }^{{\bf \Gamma} t } \,  
             {\rm e }^{{\bf \Gamma}^{\!\top}\!t } 
\end{equation*}
and integrates it by parts to show that 
$\mathcal I\, {\bf \Gamma}^{\!\top} \!+\!  
{\bf \Gamma} \, \mathcal I = 
-\mathsf{I}_{2n}$.
The solution (\ref{sol}) with 
${\bf Q} = {\bf D} = -\alpha ({\bf \Gamma} + 
                          {\bf \Gamma}^{\!\top})$ 
and ${\bf A}={\bf \Gamma}$  
shows that 
${\bf V} = -\alpha ({\bf \Gamma} \, \mathcal I +
           \mathcal I \, {\bf \Gamma}^{\!\top}) = 
           \alpha \mathsf I$. 
Since ${\bf \Gamma} \in {\rm Mat}(2n,\mathbb R)$, 
its complex eigenvalues will occur in 
conjugate pairs, 
then ${\rm Tr} \, {\bf \Gamma} \in \mathbb R$. 
Since it is also \hy{AS}{AS}, 
${\rm Tr}\,{\bf \Gamma} \ne 0$, 
then the value of $\alpha$ holds if one considers 
the definition of $\bf \Gamma$ in Eq.(\ref{gammamat}) 
and the fact that ${\rm Tr}(\mathsf J {\bf H}) = 0$, 
since $\bf H$ is symmetric. 
\end{proof}

The state studied in Corol.\ref{corol:diag} is an example of the multiplicity of the 
steady-state with respect to different matrices $\bf \Gamma$ and $\bf D$.
There is an infinite number of matrices satisfying the relation 
$\alpha ( {\bf \Gamma} + {\bf \Gamma}^{\!\top} ) + {\bf D} = 0$ and  
giving rise to the same steady state. However, for a given pair of matrices 
$\bf \Gamma$ and $\bf D$, the solution ${\bf V(\Gamma, D)}$ 
is uniquely given in Eq.(\ref{sol}).  

\subsection{ Separable States }\label{seps}              
A necessary and sufficient condition for an 
$n$-mode Gaussian state to be {\it separable} with respect to one of the modes, 
say the $k^{\underline{\text{th}}}$ one, is defined in terms of its \hy{CM}{CM} as 
${\bf V}^{\!\top_{\!\! k}} + i \mathsf J \ge 0$, where  
$ {\bf V}^{\!\top_{\!\! k}} := {\bf T}_{\!k} {\bf V} {\bf T}_{\!k} $. 
The transformation 
${\bf T}_{\!k} = {\bf T}_{\!k}^{-1} = {\bf T}_{\!k}^\top$ 
is a local time inversion on the $\hat x$ operator, {\it viz}.,  
\begin{equation}
{\bf T}_{\!k} \hat x = (\hat q_1,..., \hat q_k,...,q_n,
                        \hat p_1,...,-\hat p_k,...,\hat p_n)^\top
\end{equation}
and, of course, can not be implemented unitarily. 
Since ${\bf T}_{\!k}$ is orthogonal, we can express the separability condition
equivalently as \cite{simon2,werner}
\begin{equation}                                                                         \label{bfsep}
{\bf V} + i \mathsf J^{\!\top_{\!\! k}} \ge 0.  
\end{equation}

The statements in Eqs.~(\ref{clas}) and~(\ref{clas2}) 
can be readily modified to the present case:  
\begin{subequations}                                                                     \label{seprel}
\begin{eqnarray}                                                                                
(\text{Corollary~\ref{cor:if}} ) \,\,\, 
{\bf D}_{[i \mathsf J^{\!\top_{\!\! k}}]}  \ge 0 
&\Longrightarrow& {\bf V} + i \mathsf J^{\!\top_{\!\! k}} \ge 0;                         \label{seprel:1} \\
(\text{Corollary~\ref{cor:iff}} ) \,\,\, 
\widetilde{\bf D}_{[i \mathsf J^{\!\top_{\!\! k}}]}  \ge 0 
&\Longleftrightarrow& {\bf V} + i \mathsf J^{\!\top_{\!\! k}} \ge 0.                     \label{seprel:2}  
\end{eqnarray}
\end{subequations}
The interpretations of these conditions are also readily adap\-ted 
from those in the previous subsection, it is just a question of changing 
the dichotomy ``classical/nonclassical'' to ``separable/entangled''.  

All classical states (not only the Gaussian ones) are separable, since they are written 
as a convex sum of coherent states which are separable [see Eq.(\ref{prep})]. 
As a consequence, a hierarchy of the dynamics of \hy{LME}{LME}s can be established: 
the set of matrices such that 
$\widetilde{\bf D}_{[-\mathsf I]} \ge 0$ in (\ref{clas2}) is a subset of those 
satisfying 
$\widetilde{\bf D}_{[i \mathsf J^{\!\top_{\!\! k}}]} \ge 0$ in (\ref{seprel:2}).
However, this is not true for the matrices in (\ref{clas}) and (\ref{seprel:1}), 
because both only give a sufficient condition. 

Now, consider the following partition of the number of \hy{DF}{DF} 
of a state: $ n = n_1 + n_2$, where $n_i$ is the number of 
\hy{DF}{DF} of each partition. Define also the local time inversion operation as 
\begin{equation}                                                                         \label{pt}
\!\!{\bf T}_{\!n_2} \hat x =\! (\hat q_1,...,\hat q_n,\hat p_1,...,\hat p_{n_1},
                         -\hat p_{n_1+1},...,-\hat p_{n_1+n_2})^\top\! .
\end{equation}
The separability criteria already exposed is a necessary and sufficient condition 
only if $n_2 = 1$. For all other cases, entangled states with 
${\bf V} + i {\bf T}_{\!n_2} \mathsf J {\bf T}_{\!n_2} \ge 0$ 
are bound entangled, {\it i.e.}, 
they have nondistillable entanglement \cite{werner}. 
One can also relate the reservoir properties with this bona-fide relation
through the replacement  
$\mathsf J^{\!\top_{\!\! k}} \to {\bf T}_{\!n_2} \mathsf J {\bf T}_{\!n_2}$ 
in Eqs.(\ref{seprel}).  
Note that the bona-fide relation in question does not say whether the state is separable 
or bound-entangled. 

\subsection{ Gaussian Steerability }      
Quantum Steering is a form of correlation related to the ability of one part of a system 
to modify the state of a companion system when only local-measurements are performed on 
the former. 
More precisely, if through local measurements and classical communication 
one part of the system is able to convince the other part that they share an 
entangled state, the state is said to be steerable with respect to the 
first part \cite{wiseman}.

As in the previous subsection, 
considering the partition of the \hy{DF}{DF} as $ n = n_1 + n_2$, 
a state is {\it non}-Gaussian-steerable with respect to the first part 
(with $n_1$ \hy{DF}{DF}) if and only if \cite{wiseman2}
\begin{equation}                                                                         \label{gst}
{\bf V} + i \, {\bf \Pi}_{2} \ge 0, 
\,\,\, 
{\bf \Pi}_{2} := \tfrac{1}{2} (\mathsf J + {\bf T}_{\!n_2} \mathsf J {\bf T}_{\!n_2}), 
\end{equation}
with ${\bf T}_{\!n_2}$ defined in (\ref{pt}). 
In other words, it is not possible to steer the state of part 1, 
making local Gaussian measurements on part 2, if the last condition holds. 
The steering relation with respect to the second part 
is obtained by changing the roles of $n_1$ and $n_2$.  

As before, this concept can be related to the dynamical matrices,
and one can derive similar formulas by just replacing 
$ \mathsf J^{\!\top_{\!\! k}} \to {\bf \Pi}_{2} $ in (\ref{seprel}).
All Gaussian steerable states are entangled \cite{wiseman},
thus the set of matrices such that 
$\widetilde{\bf D}_{[i{\bf \Pi}_{2}]} \ge 0$ is a subset of the 
ones satisfying 
$\widetilde{\bf D}_{[i \mathsf J^{\!\top_{\!\! k}}]} \ge 0$.

\

This concludes our analysis of the bona-fide relations used across this paper.

\section{Symmetries of Steady-States} \label{ess}        
In principle, symmetries of the steady-states can be associated with the symmetries 
of the dynamics governed by the \hy{LME}{LME} \cite{albert,albert2}. 
In the perspective developed in this work, 
the relation of the steady-state symmetries 
and the symmetries of the dynamical matrices ${\bf \Gamma}$ and ${\bf D}$ 
will be investigated. 

Two \hy{LE}{LEs}, 
$\lfloor {\bf V},{\bf \Gamma}, {\bf D}   \rceil$ and
$\lfloor {\bf V}',{\bf \Gamma}', {\bf D}'   \rceil$, 
are said to be {\it covariant} when their matrices are related by  
\begin{equation}                                                                         \label{lyapcov}
{\bf V'} = \mathbf W {\bf V} \mathbf W^{\! \top } , \,\,\, 
{\bf \Gamma'} = \mathbf W {\bf  \Gamma} \mathbf W^{-1}, \,\,\, 
{\bf D'} = \mathbf W {\bf D} \mathbf W^{\! \top }, 
\end{equation}
for $ \mathbf W \in {\rm GL}(2n, \mathbb R)$. 
Note that, for an orthogonal $\bf W$, all above matrices 
are subjected to the same transformation. 
This covariance is helpful to determine the invariance properties of 
a steady-state, working directly at the level of the \hy{CM}{CM}:  

\begin{prop}                                                                              \label{prop:inv}
If $\mathbf \Gamma$ and ${\bf D}$ are invariant under 
${\bf W} \in {\rm GL}(2n,\mathbb R)$ [{\it i.e.}, 
${\bf W}  {\bf \Gamma} {\bf W}^{-1} = {\bf \Gamma}$
and
${\bf W} {\bf D} {\bf W}^{\!\top} = {\bf D}$], 
then ${\bf V}$ is invariant as well [{\it i.e.},  
${\bf W} {\bf V} {\bf W}^{\!\top} = {\bf V}$]. 
In this case, we say that $\mathbf W$ is a symmetry 
transformation of the Lyapunov equation. 
\end{prop}
\begin{proof}[Proof] Writing the solution 
${\bf V}({\bf \Gamma},{\bf D})$ as in Eq.(\ref{sol}) 
for the \hy{LE}{LE} 
$\lfloor {\bf V},{\bf \Gamma}, {\bf D} \rceil$, 
and since $\det {\bf W} \ne 0$, it is easy to see that 
${\bf V}({\bf \Gamma},{\bf D}) = 
 {\bf V}({\bf W}  {\bf \Gamma} {\bf W}^{-1},
 {\bf W} {\bf D} {\bf W}^{\top})$, 
{\it i.e.}, $\bf V$ remains unchanged.
\end{proof}

Let us give some particular but useful examples. Consider the transformation 
\begin{equation}                                                                         \label{transf1}
\mathbf W = \hat \sigma_z \otimes \mathsf I_2 = 
\left( 
\begin{array}{cc}
\mathsf{I}_2 & 0_2 \\ 
0_2 & - \mathsf{I}_2
\end{array} \right) 
\in {\rm GL}(4, \mathbb R). 
\end{equation}
If $\bf \Gamma$ and $\bf D$ are invariant under this transformation, 
they are necessarily written as 
${\bf \Gamma} = {\bf \Gamma}_1 \oplus {\bf \Gamma}_2$ and
${\bf D} = {\bf D}_1 \oplus {\bf D}_2$, where 
${\bf \Gamma}_j$, ${\bf D}_j \in {\rm Mat}(2,\mathbb R)$.
As a consequence of Prop.\ref{prop:inv}, ${\bf V} = {\bf V}_1 \oplus {\bf V}_2$, 
{\it i.e.}, it will be the \hy{CM}{CM} 
of a state without position-momentum correlations.
This invariance can be retrieved directly from the solution (\ref{sol}):
if $\bf A$ and $\bf Q$ are block-diagonals then $\bf P$ will also be 
block diagonal.  

Another example is the transformation 
\begin{equation}                                                                         \label{change}         
\mathbf W =  
\left( 
\begin{array}{cc}
0_2 & \mathsf{I}_2 \\ 
\mathsf{I}_2 & 0_2
\end{array} \right) 
\in {\rm GL}(4, \mathbb R).                                   
\end{equation}
If $\bf \Gamma$ and $\bf D$ are invariant under this, 
the \hy{CM}{CM} will have momentum correlations equal to position correlations:
\begin{equation}                                                                         \label{invchange}
\mathbf V = 
\left(
     \begin{array}{cc}
     {\bf V}_{ \! 1} & {\bf V}_{12} \\ 
     {\bf V}_{12} & {\bf V}_{ \! 1}    
     \end{array}                     \right). 
\end{equation}

Focusing on the symplectic group, {\it i.e.}, 
choosing ${\bf W} \in \mathrm{Sp}(2n, \mathbb R) \subset {\rm GL}(2n, \mathbb R)$, 
we use the definition of $\bf \Gamma$ in (\ref{gammamat}) and the 
covariance relation (\ref{lyapcov}) to write 
\begin{equation}                                                                         \label{simcov}
{\bf V}' = {\bf W} {\bf V}{\bf W}^{\!\top}, \,  
{\bf H}' = {\bf W}^{-\top} \! {\bf H} {\bf W}^{-1}, \, 
\lambda' = {\bf W} {\lambda} . 
\end{equation}
This symplectic covariance, by the Stone-von Neumann theorem \cite{gosson}, 
is nothing but the representation of a unitary transformation of the \hy{LME}{LME}:
\begin{equation}                                                                         \label{unicov}
\hat \rho \longrightarrow \hat {\mathrm U} {\hat \rho} \hat{ \mathrm U}^{\dagger}, \,\,\,  
{\hat H}  \longrightarrow \hat {\mathrm U } {\hat H} \hat {\mathrm U}^{\dagger}, \,\,\, 
{\hat L_i} \longrightarrow \hat{\mathrm U}  {\hat L}_i \hat {\mathrm U}^{\dagger}, 
\end{equation}
where the unitary operator $\hat {\mathrm U}$ is the Metaplectic operator 
associated with ${\bf W} \in \mathrm{Sp}(2n, \mathbb R)$ \cite{nicacio,gosson}.

If we rearrange the elements of (\ref{cmdef}) consistently with the reordering 
$\hat x \mapsto \check x := (\hat q_1,\hat p_1,...,\hat q_n,\hat p_n )$ 
of (\ref{vecx}), then the invariance under the 
(nonreordered) transformation in (\ref{transf1}) implies that the steady-state of the 
system, with \hy{CM}{CM} ${\bf V} = {\bf V}_1 \oplus {\bf V}_2$,  
is the product state $\hat \rho = \hat \rho_1 \otimes \hat \rho_2$. 
In the reordered basis, the transformation (\ref{transf1}) is symplectic. 
Similarly, the (nonreordered) matrix in (\ref{change}) is also symplectic 
in the reordered basis, and it realizes the exchange of the subsystems. 
Consequently, the matrix in (\ref{invchange}) is the \hy{CM}{CM} of states with 
same local purity (symmetric states).  

As a last example, consider a symplectic rotation 
$\mathsf{R} \in {\rm K}(n) := {\rm Sp}(2n,\mathbb R) \cap {\rm O}(2n)$. 
As a consequence of its symplecticity and orthogonality, it is written as \cite{simon} 
\begin{equation}
\mathsf R =  
\left(
       \begin{array}{cc} 
          \mathbf Y  & \mathbf Z  \\
         -\mathbf Z & \mathbf Y
       \end{array}   
\right)                                                                                  \label{MO3}       
\end{equation} 
with $\mathbf Y, \mathbf Z \in {\rm Mat}(n,\mathbb R)$ satisfying the following 
conditions:
\begin{equation}
\mathbf Y\mathbf Y^{\!\top} + 
\mathbf Z\mathbf Z^{\!\top} = \mathsf{I}_n,  \,\,\,\,
\mathbf Y\mathbf Z^{\!\top} - 
\mathbf Z\mathbf Y^{\!\top} = 0_n .                                                        \label{MO5}       
\end{equation}
Any matrix written as
$ 
{\bf M}: =  m_1 \mathsf I_{2n} + m_2 \mathsf J \in {\rm Mat}(2n,\mathbb R),  
$ 
with $m_1,m_2 \in \mathbb R$ is invariant under the whole group ${\rm K}(n)$.  
Note that, if ${\bf M} = {\bf M}^{\!\top}$, then $m_2 = 0$. 
If we consider 
\begin{equation}
{\bf \Gamma} = -\gamma_1 \mathsf I_{2n} + \gamma_2 \mathsf J, \,\,\,  
{\bf D} = \delta \mathsf I_{2n}, 
\end{equation}
{\it i.e.}, both invariant under ${\rm K}(n)$, then Prop.\ref{prop:inv} implies that
${\bf V} = \nu \, \mathsf I_{2n}$. 
By the other side, Corol.\ref{corol:diag} is a necessary and sufficient condition 
for this \hy{CM}{CM}, thus $\nu = \tfrac{\gamma}{2 \delta}$. 
It is important to mention that the matrices $\bf \Gamma$ and $\bf D$ on 
that corollary need not to be invariant. 

The subgroup of local rotations in $ {\rm K}(n)$ is described 
as the set of matrices (\ref{MO3}) with 
\begin{equation}
\mathbf Y = {\rm Diag}(y_{1}, y_{2}, ..., y_{n}),    \,\,\, 
\mathbf Z = {\rm Diag}(z_{1}, z_{2}, ..., z_{n}),
\end{equation}
which corresponds to a rotation 
\begin{equation}                                                                         \label{locrot}
{\sf R}_i := \begin{pmatrix}
              y_{i} & z_{i} \\ - z_{i} & y_{i}  
              \end{pmatrix}                    \in {\rm K}(1) 
\end{equation}
in each respective canonical pair $(\hat q_i, \hat p_i)$.  
The matrix ${\bf \Gamma} \in {\rm Mat}(2n,\mathbb R)$ is invariant under 
the local rotation subgroup if it is of the following form:
\begin{equation}                                                                         \label{locgamma}
{\bf \Gamma}  = \begin{pmatrix}
                {\bf \Gamma}_1 & {\bf \Gamma}_2 \\ - {\bf \Gamma}_2 & {\bf \Gamma}_1   
                \end{pmatrix}  
\end{equation}
with 
${\bf \Gamma}_i := {\rm Diag}(\gamma_{i1},...,\gamma_{in}) \in {\rm Mat}(n,\mathbb R)$. 
Since ${\bf D}$ is symmetric, it will be invariant under the same subgroup if it is 
written as 
\begin{equation}                                                                         \label{locd}
{\bf D} = {\rm Diag}(d_1,...,d_n,d_1,...,d_n) \in {\rm Mat}(2n,\mathbb R). 
\end{equation}
 
Assuming that $\bf \Gamma$ and $\bf D$ have this invariant structure, 
Prop.\ref{prop:inv} guaranties that the \hy{CM}{CM} of the steady-state will be 
\begin{equation}                                                                         \label{loccm}               
{\bf V} = {\rm Diag}(v_1,...,v_n,v_1,...,v_n), \,\, v_i \in \mathbb R \, \forall i, 
\end{equation}
which is the \hy{CM}{CM} of $n$-mode thermal state.  

\section{ Examples }\label{eI}                       
Let us now present some examples to show the usefulness of the results presented 
in this work.    
\vspace{-1.5cm}
\subsection{Two Oscillators interacting with Thermal Baths }\label{to} 
\vspace{-0.29cm}
Consider two coupled harmonic oscillators, each one interacting with its own 
thermal bath. The frequency of the oscillators are 
$\omega_1$ and $\omega_2$ and the spring constant is $\kappa$.  

The Hamiltonian of the system is given by (\ref{ham-lind}) with $\xi = 0$, $H_0 = 0$ 
and 
\begin{equation}                                                                         \label{hess1}
{\bf H} = \left[ 
                 \begin{array}{cc}
                 \omega_1 + \frac{\kappa}{2} & -\tfrac{\kappa}{2} \\
                 -\tfrac{\kappa}{2} & \omega_2 + \frac{\kappa}{2}
                 \end{array}
         \right]  \oplus 
          \left[ 
                 \begin{array}{cc}
                 \omega_1 & 0 \\
                 0        & \omega_2
                 \end{array}
         \right]. 
\end{equation}
The coupling between a given oscillator and the respective
reservoir is described by the Lindblad operators \cite{wiseman}
\begin{equation}                                                                         \label{tb}
\!\! \hat L_k  = \sqrt{\hbar\zeta_k (\bar N_k + 1)} \, \hat a_k , \, 
\hat L_k' = \sqrt{\hbar\zeta_k \bar N_k} \, \hat a_k^\dagger, \, k = 1,2, 
\end{equation}
where $\zeta_k \ge 0$ are the bath-oscillator couplings, 
$\bar N_k \ge 0$ are thermal occupation numbers, 
and $\hat a_k\! := \!(\hat q_k + i \hat p_k)/\sqrt{2\hbar}$ 
is the annihilation operator associated with mode $k$.  
This choice for the reservoirs and Eq.(\ref{ham-lind}) allow us to identify  %
\begin{equation}                                                                         \label{lintb}                                                                       
\begin{aligned} 
{\lambda}_1 & =  \sqrt{\tfrac{\zeta_1}{2} (\bar N_1 + 1)} (i, 0, -1, 0 )^\top,  \\
\lambda'_1  & =  \sqrt{\tfrac{\zeta_1}{2} \bar N_1} ( i , 0 , -1, 0 )^\dagger,  \\
{\lambda}_2 & =  \sqrt{\tfrac{\zeta_2}{2} (\bar N_2 + 1)} (0, i, 0, -1 )^\top,  \\
\lambda'_2  & =  \sqrt{\tfrac{\zeta_2}{2} \bar N_2} ( 0 , i , 0 , -1 )^\dagger. 
\end{aligned} \vspace{0cm}
\end{equation} 
With the above vectors, 
and using Eqs.(\ref{gammamat}), (\ref{upsilon}) and (\ref{dmat}), one finds
\begin{equation}                                                                         \label{dynsys}
\begin{aligned}
&\!\!{\bf D} = \pmb{D} \oplus \pmb{D}, \, 
\pmb{D} := {\rm Diag}[ \zeta_1 (2\bar{N}_{1} + 1),\zeta_2 (2\bar{N}_{2} + 1)], \\
&\!\!{ \bf \Gamma }  = 
\left[ \begin{array}{cccc}
        -\frac{\zeta_1}{2} & 0 & \omega_1 & 0 \\
         0                  &  -\frac{\zeta_2}{2} & 0  & \omega_2  \\
        -\omega_1 - \frac{\kappa}{2}  & \frac{\kappa}{2}  &  - \frac{\zeta_1}{2} & 0 \\
         \frac{\kappa}{2}  & - \omega_2 - \frac{\kappa}{2}&  0 &- \frac{\zeta_2}{2}
       \end{array} 
\right].
\end{aligned}\vspace{0cm}
\end{equation}
For simplicity, we will consider 
$ \zeta_1 = \zeta_2 = \zeta$, $\omega_1 = \omega_2 = \omega$,  
and the eigenvalues of $\bf \Gamma$ become 
\begin{equation}
{\rm Spec}_{\mathbb C}({\bf \Gamma}) =  
  \left\{ -\tfrac{\zeta }{2} \pm i \omega, 
 -\tfrac{\zeta }{2} \pm i \sqrt{\omega(\omega + \kappa)  } \right\}.  
\end{equation}
As one can see, $\bf \Gamma$ is \hy{AS}{AS} since 
${\rm Re}\left[ {\rm Spec}_{\mathbb C}({\bf \Gamma}) \right] < 0$ 
for all (positive) values of the parameters.   

The separability of the steady-state will be retrieved from Eqs.(\ref{seprel}). 
Since $\bf \Gamma \ne {\bf \Gamma}^{\!\top}$, 
Eq.(\ref{seprel:1}) will be applied and, calculating the eigenvalues of 
${\bf D}_{ [ i \mathsf J^{\!\top_{\!\! k} } ] }$, one finds
\begin{widetext}
\begin{equation}
{\rm Spec}_{\mathbb R} ({\bf D}_{ [ i \mathsf J^{\!\top_{\!\! 2} } ] } ) = 
\left\{ 
\zeta (\bar{N}_1 + \bar{N}_2 + 1) \pm 
\sqrt{ \tfrac{\kappa^2}{2} + \zeta^2 (\bar{N}_1 - \bar{N}_2)^2 + \zeta^2  \pm 
\sqrt{ \tfrac{\kappa^4}{4} + \zeta^2 \kappa^2 + 4\zeta^4 (\bar{N}_1 - \bar{N}_2)^2 }}
\right\}.   
\end{equation}
\end{widetext}
The steady-state is separable if this spectrum is non-negative,  
or explicitly when  
\begin{equation}                                                                         \label{sepex1}                                                                         
\frac{\zeta}{\kappa} \ge   
   \mathcal S( \bar{N}_1 , \bar{N}_2 ) := \sqrt{ 
   \frac{ ( 2\bar{N}_1 + 1 )( 2 \bar{N}_2 + 1 ) }
        { 16 \bar{N}_1 \bar{N}_2 (\bar{N}_1 + 1) (\bar{N}_2 + 1)}  }.                           
\end{equation}
In this example, states with 
$ 0 \in {\rm Spec}_{\mathbb R}({\bf D}_{ [ i \mathsf J^{\!\top_{\!\! k} } ] } )$ 
are all lying in the surface 
$\zeta/\kappa = \mathcal S( \bar{N}_1 , \bar{N}_2 ) $.  
The classicality of the steady-state will be determined by Eq.(\ref{clas}) with
\begin{equation}
\begin{aligned}
&{\rm Spec}_{\mathbb R} ({\bf D}_{ [ -\mathsf I ] } ) = \\
&\left\{ 
\zeta (\bar{N}_1 + \bar{N}_2 ) \pm  \tfrac{\kappa}{2} \pm
 \sqrt{ \tfrac{\kappa^2}{4} + \zeta^2 (\bar{N}_1 - \bar{N}_2)^2 }
\right\},
\end{aligned}
\end{equation}
and the steady-state will be classical if 
\begin{equation}                                                                          \label{clasex1}
\frac{\zeta}{\kappa} \ge \mathcal P (\bar{N}_1 , \bar{N}_2):=
\frac{\bar{N}_1 + \bar{N}_2  }{4 \bar{N}_1 \bar{N}_2   }.  
\end{equation}
In Fig.\ref{fig1} 
we show the functions in (\ref{sepex1}) and (\ref{clasex1}).  
Since a classical state is always separable, 
$\mathcal S(\bar{N}_1 , \bar{N}_2) < \mathcal P(\bar{N}_1 , \bar{N}_2)$. 

\begin{figure}[!htb] 
\centering
\includegraphics[width=7.2cm,trim=0 29 0 30]{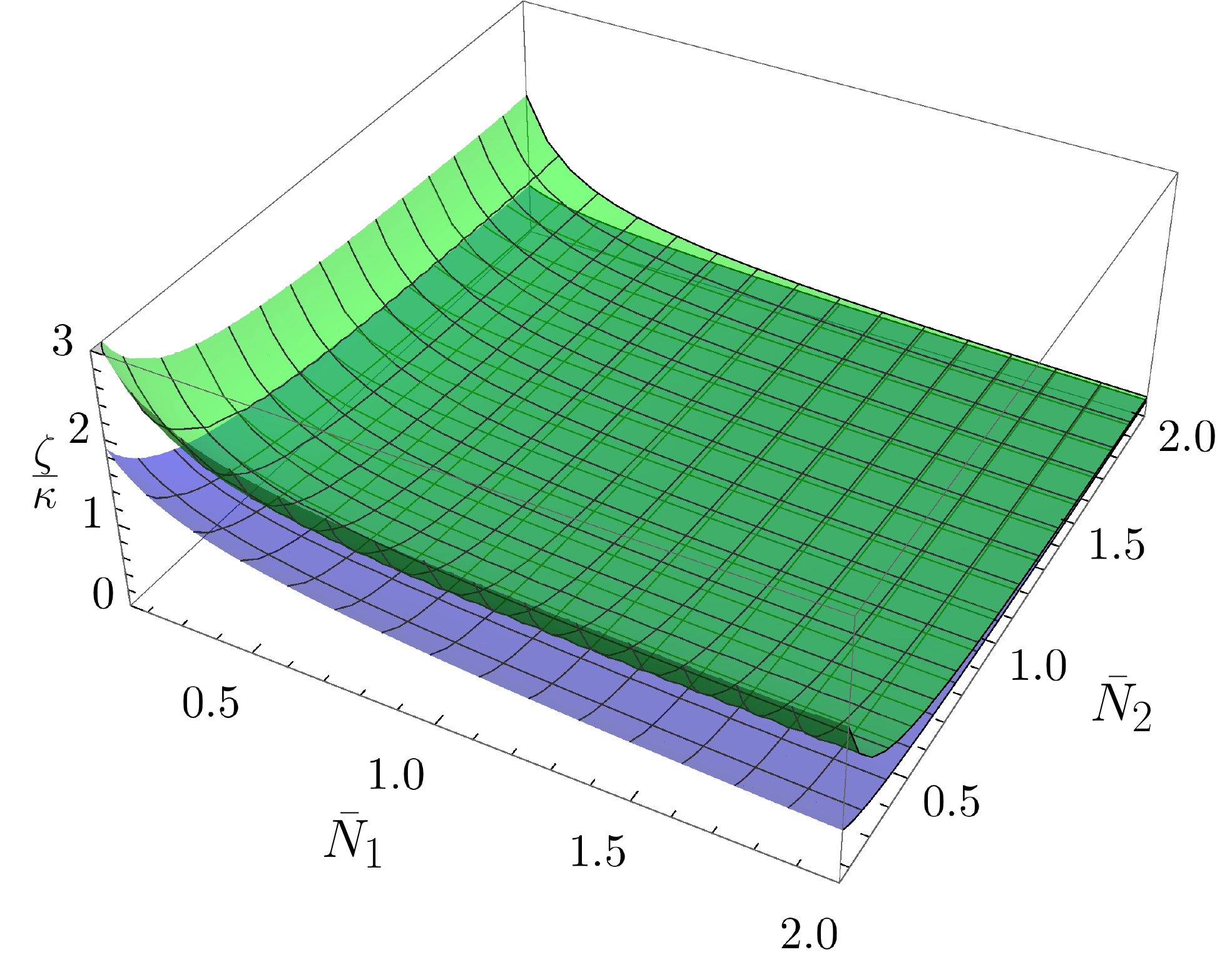} 
\caption{%
(Color Online) Separability surface $\mathcal S$ (bottom) and 
Classicality surface $\mathcal P$ (top).
States either on or above $\mathcal S$ are separable while 
states either on or above $\mathcal P$ are classical.   
Between $\mathcal S$ and $\mathcal P$ there are classical
and nonclassical separable states. 
Bellow $\mathcal S$, there are separable and entangled states. 
The divergence of both functions for $\bar N_1 \to 0$ and for 
$\bar N_2 \to 0$ are not shown in this plot. 
}                                                                                        \label{fig1} 
\end{figure}

To understand the sufficiency of the results for the system in consideration,
as a consequence of the fact that ${\bf \Gamma} \ne {\bf \Gamma}^{\!\top}$, 
we will explore the separability and classicality criteria directly applying both to the 
\hy{CM}{CM} of the steady-state. 
Using Eqs.~(\ref{dynsys}), we are able to obtain analytically the
solution ${\bf V} = {\bf P}({\bf \Gamma},{\bf D})$ using (\ref{sol}) 
or solving algebraically the \hy{LE}{LE} $\lfloor {\bf V}, {\bf \Gamma}, {\bf D} \rceil$.  
For simplicity and without loss of generality, we choose $\bar N_1 = \bar N_2 = \bar N$ 
and the solution is 
\begin{equation}                                                                         \label{sscm}
\begin{aligned}
{\bf V} &= (2\bar N +1) \mathsf{I}_{4} \, +   \\ 
&  \frac{(2\bar N +1)\kappa}{\zeta^2 + 4\omega(\omega+\kappa)}
\left[ \begin{array}{cc}
  \omega & \tfrac{1}{2} \zeta  \\
  \tfrac{1}{2} \zeta & (\omega+\kappa) 
 \end{array} \right]\! \otimes \!
 \left[\! \begin{array}{rr}
  -1 & 1  \\
   1 & -1 
 \end{array} \!\right].
\end{aligned}
\end{equation}
The steady-state is classical if and only if the above \hy{CM}{CM} obeys (\ref{bfclas}), 
or working out its eigenvalues, if and only if
\begin{equation}                                                                         \label{clasex1a}                                                                         
\frac{\zeta}{\kappa} \ge   
   \mathcal P'( \bar{N} ) := \sqrt{ \frac{1}{4\bar{N}^{2}} - 
                                    \left(\frac{2\omega}{\kappa} + 1\right)^2   }.                           
\end{equation}
As for separability, the steady state will be separable if and only if 
the condition (\ref{bfsep}) is satisfied, which reads as 
\begin{equation}                                                                         \label{sepex1a}                                                                         
\frac{\zeta}{\kappa} \ge   
   \mathcal S'( \bar{N} ) := \sqrt{ 
                                    \frac{ 1 } { 16 \bar{N}^2(\bar{N} + 1)^2} -
                                     \left(\frac{2\omega}{\kappa} + 1\right)^2  }.                           
\end{equation}
In Fig.~\ref{fig2}, the two functions representing the necessary and sufficient 
conditions in Eqs.~(\ref{clasex1a}) and~(\ref{sepex1a}) are shown. 
We also compare them with the two sufficient conditions 
(\ref{sepex1}) and (\ref{clasex1}), already plotted in Fig.~\ref{fig1}. 
It is clear that, for the system in question, even if the sufficient criteria become 
tighter for smaller values of $\bar{N}$, they are in general unable to determine 
whether a state is entangled or nonclassical. 

\begin{figure}[!htbp] 
\centering
\includegraphics[width=8.cm,trim=0 0 0 0]{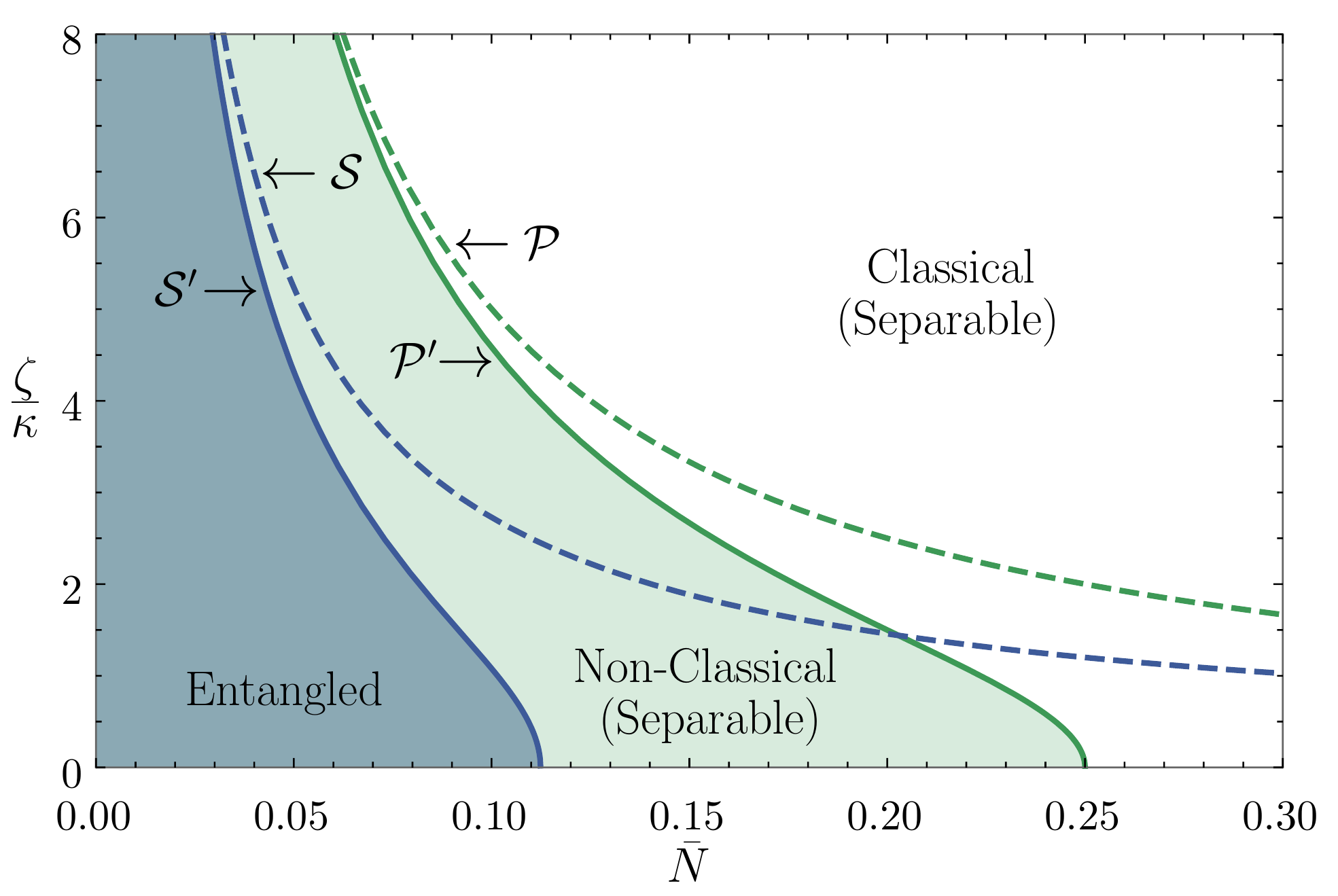} 
\caption{%
(Color Online) Necessary and sufficient conditions for classicality 
and for separability of the steady-state. 
We show the functions $\mathcal P'( \bar{N} )$ in (\ref{clasex1a}) 
and $\mathcal S'( \bar{N} )$ in (\ref{sepex1a}), 
as indicated in the graph. 
The regions limited by these two functions indicates the 
nature of the steady-state. 
We also plot the functions 
$\mathcal S(\bar{N},\bar{N})$ in (\ref{sepex1}) and 
$\mathcal P(\bar{N},\bar{N})$ in (\ref{clasex1}) 
as dashed curves, see Fig.~\ref{fig1}. In this plot $\omega/\kappa = 0.5$. 
}                                                                                        \label{fig2} 
\end{figure}

For the Gaussian-steering property, we also consider the case $\bar N_1=\bar N_2 =\bar N$
and the sufficient condition is determined by calculating the matrix 
${\bf D}_{ [ i{\bf \Pi}_{k} ]}$ 
[the matrix ${\bf \Pi}_{k}$ is defined in (\ref{gst}), and here ${n_1} = {n_2} = 1$], 
which has the eigenvalues  
\begin{eqnarray}
&&\!\!\!\!\!\!{\rm Spec}_{\mathbb R} (
{\bf D}_{ [ i{\bf \Pi}_{1}] } )  =  
{\rm Spec}_{\mathbb R} (
{\bf D}_{ [ i{\bf \Pi}_{2}] } )  =       \nonumber \\
&&\!\!\!\!\!\! \left\{ ( 2 \bar{N} +1), \zeta( 2 \bar{N} +1) \pm \sqrt{4\zeta^2 + \kappa^2}   
\right\}.    
\end{eqnarray}
The steady-state will be {\it non}-Gaussian steerable with respect to both partitions if 
$\zeta/\kappa \ge [4 (2\bar{N} + 1)^2 + 4]^{-1/2}$. 

\subsection{Two oscillators interacting with thermal baths in RWA}\label{tor}  
Let us consider the same system as before, but with the Hamiltonian 
\begin{equation}                                                                         \label{hess2}
{\bf H} = \left[ 
                 \begin{array}{cc}
                 \varpi_1  & \Omega \\
                 \Omega    & \varpi_2
                 \end{array}
         \right]  \oplus 
          \left[ 
                  \begin{array}{cc}
                 \varpi_1  & \Omega \\
                 \Omega  &   \varpi_2 
                 \end{array}
         \right]. 
\end{equation}
This Hamiltonian is derived from the one in Eq.(\ref{hess1}) by applying 
a rotating wave approximation (RWA).
The procedure and the validity of this result are carefully 
discussed in the Appendix of Ref.\cite{nicacio2}, 
as well as the relation among the coupling constant $\Omega$ in Eq.(\ref{hess2}) 
with the parameters in Eq.(\ref{hess1}), 
see \cite{footnote2} for details. 
Considering also the same structure for the reservoirs in (\ref{lintb}), one finds
\begin{equation}                                                                         \label{dynsys2}
{ \bf \Gamma }  = 
          \left[ 
                 \begin{array}{cccc}
                 -\tfrac{\zeta_1}{2}  & 0 & \varpi_1 & \Omega  \\
                 0 & -\tfrac{\zeta_2}{2} &  \Omega & \varpi_2 \\
                 -\varpi_1  & -\Omega   &     -\tfrac{\zeta_1}{2} & 0 \\
                 -\Omega    & -\varpi_2 & 0 & -\tfrac{\zeta_2}{2}
                 \end{array}
         \right],   
\end{equation}
which is \hy{AS}{AS} with eigenvalues  
\begin{equation}
\!\!\! {\rm Spec}_{\mathbb C}({\bf \Gamma}) \!= \! 
\left\{ 
         -\tfrac{\zeta_1+\zeta_2}{4} \pm 
         \tfrac{1}{4}\sqrt{(\zeta_1 - \zeta_2)^2 - 4 \Omega^2} \pm i \varpi
                                                                            \right\},  
\end{equation}
where we used $\varpi_1 = \varpi_2 =: \varpi$. 
Note that $\bf \Gamma$ in (\ref{dynsys2}) and $\mathbf D$ in (\ref{dynsys}) 
are invariant under a rotation by $\mathsf J \in {\rm K}(2)$. 
Following Prop.\ref{prop:inv}, 
the steady-state \hy{CM}{CM} for this system will also be, {\it i.e.}, 
$ \mathbf V = \mathsf J \mathbf V \!\mathsf J^{\!\top}$, 
which means that position-momentum correlations are antisymmetric and 
position-position correlations are equal to momentum-momentum correlations. 
This symmetry help us to solve algebraically the \hy{LE}{LE} (\ref{lyapeq2}), 
obtaining \cite{assadian} 
\begin{equation}                                                                         \label{cmex2}
{\bf V} = \left[ 
                 \begin{array}{cccc}
                 v_{1}    & 0        & 0       & v_{14}     \\
                 0        & v_{2}    & -v_{14} & 0          \\
                 0        & - v_{14} &  v_{1}  & 0          \\
                 v_{14}   & 0        & 0       & v_{2}      \\
                 \end{array} \right],   
\end{equation}
with 
\begin{eqnarray}
v_1 &=& 2 \frac{\zeta_1\bar{N}_1 + \zeta_2\bar{N}_2}{\zeta_1 + \zeta_2} +  
          \frac{2(\bar{N}_1-\bar{N}_2)\zeta_1\zeta_2^2} 
               {(\zeta_1 + \zeta_2)(4\Omega^2 + \zeta_1\zeta_2)}+1,                      \nonumber\\
v_2 &=&  2 \frac{\zeta_1\bar{N}_1 + \zeta_2\bar{N}_2}{\zeta_1 + \zeta_2} + 
           \frac{2(\bar{N}_2-\bar{N}_1)\zeta_1^2\zeta_2} 
               {(\zeta_1 + \zeta_2)(4\Omega^2 + \zeta_1\zeta_2)} + 1,                    \nonumber\\
v_{14} &=& \frac{4\zeta_1\zeta_2\Omega (\bar{N}_2 - \bar{N}_1)}
                {(\zeta_1+\zeta_2)(\zeta_1\zeta_2 + 4\Omega^2)}. 
\end{eqnarray}
Note that, if $\zeta_2 = 0$, then ${\bf V} = (2\bar{N}_1 + 1) \mathsf I_4$, 
which has a simple structure, but it can not be simply retrieved by symmetries of 
$\bf \Gamma$ and $\bf D$. 

The simple form of (\ref{cmex2}) can be used to explicitly analyze the results in 
conditions in (\ref{clas}).
Considering for simplicity $ \zeta_1 = \zeta_2 = \zeta$, 
the (doubly degenerate) spectrum of (\ref{cmex2}) is 
\begin{equation}                                                                         \label{espex2}
{\rm Spec}_{\mathbb R} ({\bf V})  =  
 \left\{  \bar{N}_1 + \bar{N}_2 + 1 \pm 
          \frac{(\bar{N}_1 - \bar{N}_2)} {\sqrt{1+4\Omega^2/\zeta^2 }} \right\}.   
\end{equation}
From this, it is easy to see that the state (\ref{cmex2}) is classical 
for any values of the parameters, since ${\bf V} - \mathsf I_{4} \ge 0$. 
%
On the other hand, let us calculate
\begin{equation}                                                                         \label{clasex2}
{\rm Spec}_{\mathbb R}({\bf D}_{ [ -\mathsf I ] } )  =  
\left\{ 2\zeta \bar{N}_1, 2\zeta \bar{N}_2 \right\},
\end{equation}
which is non-negative for any value of $\bar{N}_1$ and $\bar{N}_2$. 
The statement in (\ref{clas}) thus tells us that all steady-states of this system 
belong to the set of classical states, 
which was already found by means of Eq.(\ref{espex2}). 
%
\subsection{Cascaded OPO}\label{opo}      
Consider an optical parametric oscillator (OPO) coupled to the vacuum field \cite{koga}. 
The Hamiltonian is written as 
$\hat H = i\hbar\epsilon ({\hat a^{\dagger 2}} - {\hat a}^2)/4$, 
where $\epsilon \ge 0$ denotes the effective pump intensity. 
The coupling with the vacuum is described by the operator 
$\hat L = \sqrt{\hbar\kappa} \hat a $, where $\kappa >0$ is 
the damping cavity rate.

With the help of Eqs.~(\ref{ham-lind}), (\ref{gammamat}), (\ref{upsilon}), 
and~(\ref{dmat}), one readily reaches 
\begin{equation}                                                                         \label{opo1}
{ \bf \Gamma }  =
\left[ \begin{array}{cc}
        \frac{1}{2}(\epsilon -\kappa) & 0  \\
        0         & -\frac{1}{2}(\epsilon +\kappa)   
       \end{array}
\right], \,\,\,
{\bf D} = \kappa \, \mathsf I_2.  
\end{equation}
The matrix $\bf \Gamma = {\bf \Gamma}^{\!\top}$ 
will be \hy{AS}{AS} in so far as $\kappa > \epsilon$.  
Also, ${\bf D} >0$, once $\kappa >0$. Following the statement in (\ref{clas2}), 
since $\widetilde{\bf D}_{[i \mathsf J^{\!\top_{\!\! 2}}]} = 
           {\rm Diag}\left(\epsilon, -\epsilon \right)$, 
the steady-state will be nonclassical if $\epsilon \ne 0$ and 
will be classical only if $\epsilon = 0$. 
The \hy{LE}{LE} is trivially solved to give the \hy{CM}{CM} of the steady-state:
\begin{equation}
{\bf V} = {\rm Diag}[\kappa/( \kappa -\epsilon),\kappa/( \kappa +\epsilon)],  
\end{equation}
which is a squeezed thermal state and corresponds to the coherent-state
solution in \cite{koga} if $\epsilon = 0$. 

For a cascaded OPO \cite{koga}, 
the system is described by the Hamiltonian 
$\hat H = \hat H_1 + \hat H_2 + \frac{1}{2i}(\hat L_1^\dagger \hat L_2 
        - \hat L_2^\dagger \hat L_1) $ 
with 
$\hat H_j = i\hbar\epsilon_j ({\hat a_j^{\dagger 2}} - {\hat a_j}^2)/4$    
and $\hat L_j = \sqrt{\hbar\kappa} \hat a_j $.
The coupling of the system with the intracavity vacuum is 
represented by $\hat L = \hat L_1 + \hat L_2$. %
Under these circumstances we write
\begin{equation}                                                                         \label{opod}
{\bf D} =
          \left[ 
                 \begin{array}{cc}
                 \kappa  & \kappa \\
                 \kappa  & \kappa
                 \end{array}
         \right]  \oplus 
          \left[ 
                 \begin{array}{cc}
                 \kappa  & \kappa \\
                 \kappa  & \kappa
                 \end{array}
         \right], 
\end{equation}
and 
\begin{equation}                                                                         \label{opog}
{ \bf \Gamma }  =  
         \left[\!\! 
                 \begin{array}{cc}
                  \frac{\epsilon_1-\kappa}{2} & 0 \\
                  -\kappa      &  \frac{\epsilon_2-\kappa}{2}
                 \end{array}\!\!
         \right] \! \oplus \!
          \left[\!\! 
                  \begin{array}{cc}
                   -\tfrac{\epsilon_1+\kappa}{2}  & 0   \\
                  -\kappa    & -\tfrac{\epsilon_2+\kappa}{2}
                 \end{array}\!\!
         \right] 
\end{equation}
with spectrum given by 
\begin{equation}
{\rm Spec}({\bf \Gamma}) =  
\left\{ -\tfrac{1}{2} (\kappa \pm \epsilon_1),
        -\tfrac{1}{2} (\kappa \pm \epsilon_2) \right\}.  
\end{equation}
Note that ${\bf \Gamma} \ne {\bf \Gamma}^{\!\top}$ is \hy{AS}{AS} if 
$\kappa > \max\{|\epsilon_1|,|\epsilon_2|\}$.  
Now the eigenvalues of ${\bf D}_{[i \mathsf J^{\!\top_{\!\! 2}}]}$ 
in (\ref{seprel}) can be calculated, yielding
\begin{equation}
{\rm Spec}_{\mathbb R} ({\bf D}_{[i \mathsf J^{\!\top_{\!\! 2}}]}) = 
\{ 
( 1 \pm \sqrt{5})\kappa, 2\kappa, 0\},  
\end{equation}
which shows that the state will be always entangled 
following Corol.\ref{cor:if}. 
The same arguments can be applied to determine the 
Gaussian steerability. Calculating the matrix 
${\bf D}_{ [ i{\bf \Pi}_{k} ]}$ 
[the matrix ${\bf \Pi}_{k}$ is defined in (\ref{gst}), and here ${n_1} = {n_2} = 1$], 
which has the eigenvalues  
\begin{equation}
\begin{aligned}
{\rm Spec}_{\mathbb R} ({\bf D}_{ [ i{\bf \Pi}_{1} ] } ) & =  
\left\{   ( 1 \pm \sqrt{5} ) \frac{ \kappa }{ 2 }, 
          ( 3 \pm \sqrt{5} ) \frac{ \kappa }{ 2 }   \right\}, \\        
{\rm Spec}_{\mathbb R} ({\bf D}_{ [ i{\bf \Pi}_{2} ] } ) & =  
\left\{   ( 3 \pm \sqrt{17} ) \frac{ \kappa }{ 2 }, k , 0     \right\}.   
\end{aligned}
\end{equation}
From these, one concludes that the state will be always Gaussian steerable with respect 
to both modes since these spectra is nonpositive. 

Following Prop.\ref{prop:inv}, the \hy{CM}{CM} of the steady-state of the cascaded OPO 
will have the symmetry induced by (\ref{transf1}), and reads
\begin{equation}                                                                         \label{opocm}
{\bf V}  = \left[\begin{array}{cc}
                 \frac{k}{k-\epsilon_1}     & -\frac{2\kappa \epsilon_1}{g_{-}} \\
                 - \frac{2\kappa \epsilon_1}{ g_{-}} & -\frac{\kappa  h_+}{ g_-}
                 \end{array}\right] \oplus 
           \left[\begin{array}{cc}
                 \frac{k}{k+\epsilon_1}     & \frac{2\kappa \epsilon_1}{ g_{+}} \\
                 \frac{2\kappa \epsilon_1}{ g_{+}}   & \frac{\kappa h_-}{ g_+}
                 \end{array}\right],
\end{equation}
where we have defined 
\begin{eqnarray*}
g_{\pm} &=& (\epsilon_1 + \epsilon_2 \pm 2 \kappa)(\epsilon_1\pm \kappa) , \nonumber \\
h_{\pm} &=& (\epsilon_1^2 + \epsilon_1 \epsilon_2 \pm \epsilon_1 \kappa 
             + 2 \kappa^2 \mp \kappa \epsilon_2)(\epsilon_2\mp k)^{-1}.  
\end{eqnarray*}
\subsection{OPO and Thermal Baths}\label{otb}        
Consider the Hamiltonian dynamics of two particles described by
\begin{equation}
\hat H = \frac{\epsilon}{4} {\pmb\{}\hat q_1,\hat p_1{\pmb\}}_{\!+} + 
         \frac{\epsilon}{4} {\pmb\{}\hat q_2,\hat p_2{\pmb\}}_{\!+} +  
         \frac{\kappa}{2} (\hat q_2 \hat p_1 + \hat p_2 \hat q_1).   
\end{equation}
This Hamiltonian is similar to the one in the previous example, 
it is basically the Hamiltonian of the cascaded OPO 
with a phase change \cite{wiseman}. 
If the particles are in contact with the thermal baths, 
as in (\ref{tb}), the dynamical matrices become 
\begin{equation}
{ \bf \Gamma }  =  {{\bf \Gamma}^{\!\top}} = 
        \tfrac{1}{2} \!\left[\!\! 
                 \begin{array}{cc}
                  \epsilon - \zeta & k \\
                  k      &  \epsilon-\zeta
                 \end{array}\!\!
         \right] \! \oplus 
         \tfrac{1}{2}\! \left[\!\! 
                  \begin{array}{cc}
                 - (\epsilon + \zeta) & k \\
                  k      & - (\epsilon+\zeta)
                 \end{array}\!\!
         \right],  
\end{equation}
and $\bf D$ is as in (\ref{dynsys}), 
and now we are considering $\bar N_1 = \bar N_2 = \bar N$ and $\zeta_1 = \zeta_2 = \zeta$.
The eigenvalues of $\bf \Gamma$ are
\begin{equation}
{\rm Spec}({\bf \Gamma}) =  
\left\{ -\tfrac{1}{2} \zeta \pm \tfrac{1}{2}( \epsilon + \kappa),
        -\tfrac{1}{2} \zeta \pm \tfrac{1}{2}( \epsilon - \kappa) \right\},   
\end{equation}
and it will be \hy{AS}{AS} if $\zeta > \epsilon + \kappa$.  
The eigenvalues of ${\bf D}_{[-\mathsf I]}$ in (\ref{clas2}) are 
\begin{equation}
{\rm Spec}_{\mathbb R} ({\bf D}_{ [ -\mathsf I ] } ) = 
\left\{ 
2\zeta \bar{N} \pm (\epsilon + \kappa), 2\zeta \bar{N} \pm (\epsilon - \kappa) 
\right\},
\end{equation}
which shows that the state will be classical if and only if 
$2\zeta \bar{N} \ge (\epsilon + \kappa)$. 
The eigenvalues of $\widetilde{\bf D}_{[i \mathsf J^{\!\top_{\!\! 2}}]}$ 
in (\ref{seprel:2}) are 
\begin{equation}
{\rm Spec}_{\mathbb R} (\widetilde{\bf D}_{[i \mathsf J^{\!\top_{\!\! 2}}]}) = 
\{ 
2\zeta \bar{N} \pm \kappa, 2\zeta(\bar{N} + 1) \pm \kappa \}, 
\end{equation}
and the steady-state will be entangled if and only if 
$2\zeta \bar{N} < \kappa$. 
In the interval $ (\epsilon + \kappa) > 2\zeta \bar{N} \ge \kappa$, 
the state is nonclassical and separable. 
The steerability of the state is determined by 
\begin{eqnarray}
&&\!\!\!\!\!\!{\rm Spec}_{\mathbb R} (
\widetilde{\bf D}_{ [ i{\bf \Pi}_{1}] } )  =  
{\rm Spec}_{\mathbb R} (
\widetilde{\bf D}_{ [ i{\bf \Pi}_{2}] } )  =       \nonumber \\
&&\!\!\!\!\!\! \left\{ ( 2 \bar{N} +\tfrac{1}{2}) \zeta \pm \sqrt{\zeta^2 + \kappa^2}, 
        ( 2 \bar{N} +\tfrac{3}{2}) \zeta \pm \sqrt{\zeta^2 + \kappa^2}    \right\}.    
\end{eqnarray}
As a consequence of the chosen parameters, 
the steady state is symmetric with respect to the steerings of 
both partitions. This is state is steerable if and only if 
$( 2 \bar{N} +\tfrac{1}{2}) \zeta < \sqrt{\zeta^2 + \kappa^2} $. 

\section{Engineering Steady-States}\label{eps}               
The unavoidable influence of uncontrollable degrees of freedom are usually responsible 
for losses of the quantumness of a system through the procedure called decoherence.     
However, a steady-state with a desired quantum property can be produced by controlling 
the parameters of the system and of the environmental action.
As the examples of the last section, 
systems of bosonic degrees of freedom have been extensively studied in what concerns 
entanglement generation \cite{tan,houhou}, 
production of pure states \cite{koga,ikeda}, and 
engineering of graph states \cite{koga,houhou}. 
On the experimental side, 
realizations of these techniques in the context of atomic ensembles were already 
performed \cite{krauter}. 

To develop a simple theoretical engi\-nee\-ring-sta\-te method for 
bosonic degrees of freedom, we will use the results provided by the 
Williamson theorem \cite{williamson,gosson,simon}: 
%
\begin{theo}[Williamson]
Let 
$\mathbf M \in {\rm Mat}(2n,\mathbb R)$ 
be a positive definite matrix: 
$\mathbf M = \mathbf M^\top > 0$. 
This matrix can be diagonalized by 
a symplectic congruence, {\it i.e.}, there exists 
$\mathsf S \in {\rm Sp}(2n, \mathbb R)$ 
such that 
\begin{equation}                                                                         \label{tw1}      
\mathsf S \mathbf M \mathsf S^\top 
= 
{\rm Diag}(\mu_1,...,\mu_n,\mu_1,...,\mu_n) =:{\bf \Lambda }, 
\end{equation}
where
$\mu_j \ge \mu_k > 0 \,\,\, \text{for} \,\,\, j > k. $
\end{theo}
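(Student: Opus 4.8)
The plan is to reduce Williamson's theorem to the real normal form of a skew-symmetric matrix. First I would use the positivity of $\mathbf{M}$ to form its unique positive-definite square root $\mathbf{M}^{1/2}$, and introduce the auxiliary matrix $\mathbf{K} := \mathbf{M}^{1/2}\mathsf{J}\mathbf{M}^{1/2}$. Since $\mathsf{J}^\top = -\mathsf{J}$ and $\mathbf{M}^{1/2}$ is symmetric, $\mathbf{K}$ is real and antisymmetric, and it is invertible because both $\mathbf{M}^{1/2}$ and $\mathsf{J}$ are. The role of $\mathbf{K}$ is to transport the symplectic structure into a frame in which the spectral theorem for normal matrices becomes available.

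Next I would invoke the real canonical form for skew-symmetric matrices: there exists an orthogonal $\mathbf{O} \in \mathrm{O}(2n)$ such that $\mathbf{O}^\top \mathbf{K}\mathbf{O} = \mathbf{N}\mathsf{J}$, where $\mathbf{N} = \mathrm{Diag}(\nu_1,\dots,\nu_n,\nu_1,\dots,\nu_n)$ collects the moduli $\nu_j>0$ of the purely imaginary eigenvalues $\pm i\nu_j$ of $\mathbf{K}$. Invertibility of $\mathbf{K}$ is precisely what forces strict positivity $\nu_j>0$ (no zero block can appear), and a permutation acting identically on the position and momentum sectors — hence itself orthogonal and symplectic — can be absorbed into $\mathbf{O}$ to enforce the ordering $\mu_j \ge \mu_k$ for $j>k$. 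I will repeatedly use that $\mathbf{N}$ commutes with $\mathsf{J}$ by virtue of its block structure $\mathrm{Diag}(D,D)$ with $D=\mathrm{Diag}(\nu_1,\dots,\nu_n)$.

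I would then define the candidate $\mathsf{S} := \mathbf{N}^{1/2}\mathbf{O}^\top \mathbf{M}^{-1/2}$ and verify the two required properties by direct computation. Diagonalization is immediate, $\mathsf{S}\mathbf{M}\mathsf{S}^\top = \mathbf{N}^{1/2}\mathbf{O}^\top\mathbf{O}\mathbf{N}^{1/2} = \mathbf{N} =: \mathbf{\Lambda}$, which identifies the symplectic eigenvalues as $\mu_j=\nu_j$. For symplecticity I would combine $\mathbf{M}^{-1/2}\mathsf{J}\mathbf{M}^{-1/2} = -\mathbf{K}^{-1}$ with the inverse normal form $\mathbf{K}^{-1} = -\mathbf{O}\mathbf{N}^{-1}\mathsf{J}\mathbf{O}^\top$ (obtained by inverting $\mathbf{O}^\top\mathbf{K}\mathbf{O}=\mathbf{N}\mathsf{J}$ and commuting $\mathbf{N}^{-1}$ through $\mathsf{J}$), giving $\mathsf{S}\mathsf{J}\mathsf{S}^\top = \mathbf{N}^{1/2}\mathbf{N}^{-1}\mathbf{N}^{1/2}\mathsf{J} = \mathsf{J}$, so $\mathsf{S}\in\mathrm{Sp}(2n,\mathbb{R})$ as required.

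The main obstacle is the skew-symmetric normal-form step: one must argue carefully that an orthogonal change of basis puts $\mathbf{K}$ into exactly the block pattern $\mathbf{N}\mathsf{J}$ adapted to the paper's split $(q,p)$ ordering, rather than the interleaved $2\times2$-block form in which such canonical forms are usually stated — the passage between the two orderings being itself an orthogonal (shuffle) permutation that can be folded into $\mathbf{O}$ — and that invertibility of $\mathbf{K}$ indeed rules out any vanishing $\nu_j$. Once that frame is fixed, the remaining manipulations (commuting $\mathbf{N}^{\pm 1/2}$ past $\mathsf{J}$ and inverting the normal form) are entirely routine.
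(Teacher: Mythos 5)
Your proof is correct and complete: $\mathbf K := \mathbf M^{1/2}\mathsf J\mathbf M^{1/2}$ is real, antisymmetric and invertible; the orthogonal normal form $\mathbf O^\top\mathbf K\mathbf O = \mathbf N\mathsf J$ (with the shuffle to the split $(q,p)$ ordering and the reordering permutation $\mathrm{Diag}(\Pi,\Pi)$ correctly absorbed into $\mathbf O$, both being orthogonal); and the candidate $\mathsf S = \mathbf N^{1/2}\mathbf O^\top\mathbf M^{-1/2}$ verifies both $\mathsf S\mathbf M\mathsf S^\top = \mathbf N$ and, via $\mathbf M^{-1/2}\mathsf J\mathbf M^{-1/2} = -\mathbf K^{-1} = \mathbf O\mathbf N^{-1}\mathsf J\mathbf O^\top$ and the commutation of $\mathbf N^{\pm 1/2}$ with $\mathsf J$, the symplectic condition $\mathsf S\mathsf J\mathsf S^\top = \mathsf J$. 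For comparison: the paper does not prove this theorem at all — it quotes it as a known result, citing Williamson, de Gosson, and Simon--Mukunda--Dutta — and your argument is essentially the standard proof found in those references, so there is no genuinely different in-paper route to contrast with. A pleasant by-product of your construction is that it also justifies the paper's follow-up remark, Eq.~(\ref{tw2}): since $\mathbf K = \mathbf M^{1/2}(\mathsf J\mathbf M)\mathbf M^{-1/2}$ is similar to $\mathsf J\mathbf M$, the moduli $\nu_j$ of its eigenvalues — which you showed coincide with the symplectic eigenvalues $\mu_j$ — are exactly the moduli of ${\rm Spec}_{\mathbb C}(\mathsf J\mathbf M)$.
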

\noindent The double-paired ordered set (or the diagonal matrix) $\bf \Lambda$ 
is called {\it symplectic spectrum} of $\mathbf M$, and $\mu_k$ are 
its symplectic eigenvalues (\hypertarget{SE}{SE}). 
These can be found from the (Euclidean) eigenvalues of 
$\mathsf J \mathbf M$ \cite{gosson}, which turn out to be 
\begin{equation}                                                                         \label{tw2}
{\rm Spec_{\mathbb C}}(\mathsf J \mathbf M) = 
{\rm Diag}(i\mu_1, ...,i\mu_n,-i\mu_1,...,-i\mu_n).
\end{equation}
%

Suppose one wants to design a reservoir structure able to produce a steady-state 
(with $n$ degrees of freedom) described by a $2n \times 2n$ \hy{CM}{CM} ${\bf V}'$. 
If one identifies this \hy{CM}{CM} with $\bf M$ in (\ref{tw1}), 
the first step is to find a suitable \hy{LE}{LE} 
able to produce its corresponding symplectic spectrum $\bf \Lambda$ as a solution, 
{\it i.e.}, its is necessary to find matrices ${\bf \Gamma}'$ and ${\bf D}'$ 
satisfying the LE $\lfloor {\bf \Lambda}, {\bf \Gamma}', {\bf D}' \rceil$.  
Assuming that it is possible to design a system-reservoir structure with this LE,  
one applies the symplectic covariance (\ref{simcov}) to it and finds 
\begin{equation}                                                                         \label{engcov}
\left\lfloor  {\bf V}' , 
         \mathsf S^{-1} {\bf \Gamma}'\mathsf S, 
         \mathsf S^{-1} {\bf D}'\mathsf S^{-\top} \right\rceil,   
\end{equation}
which, by Eq.(\ref{tw1}), is the \hy{LE}{LE} with solution 
${\bf V }' = \mathsf S^{-1} {\bf \Lambda }\mathsf S^{-\top}$.   
Reservoir engineering can then be realized by finding some convenient 
matrix $\bf \Lambda$ and a system-reservoir structure suitable to the unitary 
transformations, as in (\ref{unicov}). 
Following Eq.(\ref{simcov}), the engineered Hamiltonian and Lindblad operators 
will be, respectively, such that 
${\bf H}' = {\sf S}^{\!\top} \! {\bf H} {\sf S}$ and 
$\lambda' = {\sf S}^{-1} {\lambda}$.  

A peculiar example using (\ref{engcov}) appears when one is able to produce a 
reservoir structure such that 
${\bf D}' = \beta \bf \Lambda$ and 
${\bf \Gamma}' = - \beta \mathsf I_{2n}$ with $\beta > 0$, 
thus the use of (\ref{tw1}) gives that 
$\left\lfloor  {\bf V}', - \beta \mathsf I_{2n}, \beta {\bf V}' \right\rceil$. 
This shows that a Lindblad equation with  
${\bf \Upsilon} = \tfrac{1}{2}\beta {\bf V}' + i \beta \mathsf I_{2n}$ [see Eq.(\ref{upsilon})] 
will have a steady-state automatically given by a Gaussian with \hy{CM}{CM} ${\bf V}'$.

In fact, the matrix $\bf \Lambda$ in (\ref{tw1}) is like the 
\hy{CM}{CM} in (\ref{loccm}), 
thus any $\bf \Gamma'$ as in (\ref{locgamma}) and $\bf D'$ as in (\ref{locd}), 
which are invariants under the local rotations in (\ref{locrot}), 
are appropriate to the first step of the method. 
It is noticeable that not all reservoir structure are suitable to produce 
a diagonal matrix $\bf \Lambda$ as a \hy{CM}	{CM} of a steady-state, 
since it can not have the desired invariant structure for performing the 
first step.
However, it is still possible to use the covariance relation (\ref{simcov}) 
to design a specific steady-state from a known simple steady-state of some system, 
one of these cases (the OPO in Sec.~\ref{opo}) will be analyzed at the end of this section. 		

The special condition presented in Corol.\ref{corol:diag} can be used to engineer a 
specific and important class of states.   
Explicitly, we will use the relation (\ref{simcov}) to generate a state with \hy{CM}{CM} 
${\bf V}' = \alpha \mathsf{SS}^{\!\top}$, where  
$\mathsf S \in {\rm Sp}(2n,\mathbb R)$ and $\alpha \ge 1$. 
In other words, we want to know which matrices 
$\bf \Gamma_{\rm p}$ \hy{AS}{AS} and 
${\bf D}_{\rm p} \ge 0$ 
are necessary to construct a \hy{LE}{LE} with the solution given 
by the desired \hy{CM}{CM}. 
Note that the above-mentioned \hy{CM}{CM} represents a pure state if and only if 
$\alpha = 1$ \cite{koga,simon,gosson}. 

As stated in Corol.\ref{corol:diag}, the $n$-mode Gibbs state with 
${\bf V} = \alpha \mathsf I_{2n}$ is generated by any \hy{LE}{LE} of the form 
$\lfloor {\bf V}, {\bf \Gamma}', -\alpha( {\bf \Gamma}' + {\bf \Gamma}'^{\!\top} ) \rceil$,  
for all matrices $\bf \Gamma'$ \hy{AS}{AS} such that 
$( {\bf \Gamma}' + {\bf \Gamma}'^{\!\top} ) \le 0$.  
Using the relation (\ref{simcov}), we are able to obtain a covariant \hy{LE}{LE} 
$ \lfloor {\bf V}', 
          {\bf \Gamma}_{\rm p}, {\bf D}_{\rm p} \rceil$ 
with 
\begin{equation}                                                                         \label{es1}
\!\!\!{\bf V}' = \alpha \mathsf{SS}^{\!\top}, \,\, 
{\bf \Gamma}_{\rm p} = \mathsf S {\bf \Gamma}' \mathsf S^{-1}, \,\, 
{\bf D}_{\rm p} = -\alpha \, \mathsf S ( {\bf \Gamma}' + 
                                         {\bf \Gamma}'^{\!\top} )\mathsf S^{\!\top}. 
\end{equation}
It is interesting to note that, recalling Corol.\ref{corol:diag}, 
the value of $\alpha$ only depends on the reservoir structure  
(it does not depend on the Hamiltonian of the system) used to 
prepare the initial state ${\bf V} = \alpha \mathsf I_{2n}$; 
conveniently the Hamiltonian at this stage can be taken as zero.   
Another remarkable fact is that any one-mode Gaussian state $(n = 1)$
is included in the scheme provided by Eq.(\ref{es1}): 
these states have a \hy{CM}{CM} written as ${\bf V}'$ in ({\ref{es1}})
with $\alpha = \mu_1 \ge 1$ and $\mathsf S \in {\rm Sp}(2,\mathbb R)$, 
as a consequence of the Williamson theorem. 

In \cite{koga}, the authors established specific conditions for a system to be driven to a 
pure steady-state when evolving under the \hy{LME}{LME} subjected 
to the restrictions in (\ref{ham-lind}). 
Equation (\ref{es1}) with $\alpha = 1$ constitutes a simple connection with some 
of their results. 
%
\subsection*{Examples}                                               
\paragraph{Two Mode Thermal Squeezed States (TMTSS).} 
Consider the following \hy{CM}{CM} 
\begin{equation}                                                                         \label{cm02}
{\bf V}' =  (2\bar n + 1) \mathsf{S}_r \mathsf{S}_r^{\!\top},
\end{equation}
with 
\begin{equation}                                                                         \label{ex1}
{\mathsf S_r} := \left[
\left(\! \begin{array}{cc}
       \cosh r & \sinh r\\
       \sinh r & \cosh r\\       
       \end{array} \! \right) \!  \oplus \!
\left( \! \begin{array}{cc}
       \cosh r& -\sinh r\\
       -\sinh r& \cosh r\\       
       \end{array} \!\! \right)  \right],      
\end{equation}
where $r \ge 0$ is the squeezing parameter and 
$\bar n \ge 0$ is the mean number of thermal photons of both modes.
Note that ${\mathsf S_r} \in {\rm Sp}(4,\mathbb R)$.
Following the criteria (\ref{bfsep}) and (\ref{gst}), 
the state in (\ref{cm02}) will be respectively 
entangled iff $r > {\rm ln}(2\bar n + 1)$ 
and steerable iff $r > {\rm cosh}^{-1}(2\bar n + 1)$.

If one wants to engineer a reservoir with the steady-state given in (\ref{cm02}), 
it is possible to apply the scheme in Eq.(\ref{es1}).  
For this purpose, one needs a reservoir structure able to produce 
a steady-state of the form ${\bf V} = \alpha \mathsf I_{2n}$ with 
$\alpha = (2\bar n + 1)$. 
An example of a system with this steady-state 
is the one in Sec.\ref{tor} with $\bar N_1 = \bar N_2$.
The system in Sec.\ref{to} can also be used to this end, 
but now, besides the condition $\bar N_1 = \bar N_2$, 
one needs to take $\kappa = 0$  --- this condition is necessary 
to guarantee that ${\bf \Gamma} + {\bf \Gamma}^{\!\top} \le 0$,
as required in Corol.\ref{corol:diag}. 
Recalling that the Hamiltonian dynamics does not affect the value 
of $\alpha$ in (\ref{es1}), one can use either  
$\omega_1 = \omega_2 = \kappa = 0$ and $\zeta_1 = \zeta_2 := \zeta$ 
in (\ref{dynsys}), or 
$\varpi_1 = \varpi_2 = \Omega = 0$ and $\zeta_1 = \zeta_2 := \zeta$ 
in (\ref{dynsys2}) to obtain 
a reservoir structure with 
${ \bf \Gamma }'  = -\frac{\zeta}{2}\mathsf I_{4}$ and 
${\bf D}' = (2\bar n + 1)\zeta \, \mathsf I_{4}$. 
This structure produces the steady-state 
${\bf V} =  (2\bar n + 1) \mathsf I_{2n}$. 

Now, one needs to apply the covariance relation (\ref{es1}) 
and determine the engineered reservoir with matrices 
${\bf \Gamma}_{\rm p}$ and ${\bf D}_{\rm p}$ through the symplectic matrix (\ref{ex1}), 
{\it i.e.},
\begin{equation}                                                                         \label{ex5}
{\bf \Gamma}_{\rm p} = -\frac{\zeta}{2}\mathsf I_{4}, \,\,\, 
{\bf D}_{\rm p} = (2\bar n + 1)\zeta \,\mathsf{S}_r \mathsf{S}_r^{\!\top}. 
\end{equation}
From Eq.(\ref{simcov}), one can see that 
$\lambda' = \mathsf{S}_r \lambda$, and the Lindblad operators 
in (\ref{ham-lind}) for $\lambda'$ are characteristics of 
a squeezed thermal bath. 
\paragraph{OPO Steady-States.}
The steady-state in (\ref{opocm}) is a pure steady-state iff 
$\epsilon_1 = - \epsilon_2$ \cite{koga}. 
Under these conditions and if we define 
$\epsilon := \sqrt{( \kappa + \epsilon_2)/(\kappa-\epsilon_2)}$, 
the CM of this steady-state is written as 
\begin{equation}                                                                         \label{ex6}       
{\bf V}' = \mathsf S_{\rm p} \mathsf S_{\rm p}^\top, \,\,\, 
{\mathsf S_{\rm p}} :=  
\left(\! \begin{array}{cc}
       \frac{1 + \epsilon}{2} & \frac{1 - \epsilon}{2} \\ 
       \frac{1 - \epsilon}{2} &  \frac{1 + \epsilon}{2}  
\end{array} \! \right)
\oplus
\left(\! \begin{array}{cc}
       \frac{1 + \epsilon}{2\epsilon} & \frac{\epsilon-1}{2\epsilon} \\ 
       \frac{\epsilon-1}{2\epsilon} &  \frac{1 + \epsilon}{2\epsilon} 
\end{array} \! \right),  
\end{equation}
which is entangled for any value of $\epsilon$.
Obviously, if one wants to produce this pure state as a steady-state of an OPO, 
the only step is to produce an OPO satisfying the mentioned conditions. 
On the other side, 
it is not possible to produce it by using the covariance rules in (\ref{es1}) for an OPO, 
since Corol.\ref{corol:diag} requires ${\bf \Gamma} + {\bf \Gamma}^\top \ge 0$, 
which is not the case for $\bf \Gamma$ in (\ref{opog}) 
with $\epsilon_1 = - \epsilon_2$. 

By the Willianson theorem, the symplectic spectrum of the \hy{CM}{CM} of 
the pure state in (\ref{ex6}) is the identity matrix \cite{footnote3}. 
If one can choose suitable values of the parameters in (\ref{opod}) and (\ref{opog}) 
such that ${\bf \Gamma}'$ and $\bf D'$ satisfy the LE 
$\lfloor \mathsf I_4, {\bf \Gamma}', {\bf D}' \rceil$, 
then Eq.(\ref{engcov}) can be applied. 
In fact, this happens when $\epsilon_1 = \epsilon_2 = 0$.  
Thus, preparing an OPO system such that this last condition holds, 
Eq.(\ref{engcov}) becomes 
\begin{equation}                                                                         \label{ex7}
\left\lfloor \mathsf S_{\rm p} \mathsf S_{\rm p}^\top  , 
         \mathsf S_{\rm p} {\bf \Gamma}'\mathsf S_{\rm p}^{-1}, 
         \mathsf S_{\rm p} {\bf D}'\mathsf S_{\rm p}^{\top} \right\rceil.    
\end{equation}
with ${\bf D}'$ as in (\ref{opod}) and 
\begin{equation}                                                                         \label{ex8}
{ \bf \Gamma }'  =  
         \left[\!\! 
                 \begin{array}{cc}
                  -\frac{\kappa}{2} & 0 \\
                  -\kappa           & -\frac{\kappa}{2}
                 \end{array}\!\!
         \right] \! \oplus \!
          \left[\!\! 
                  \begin{array}{cc}
                   -\tfrac{\kappa}{2}  & 0   \\
                  -\kappa   & -\tfrac{\kappa}{2}
                 \end{array}\!\!
         \right].  
\end{equation}
Note that the matrices $\bf \Gamma'$ and $\bf D'$ in this example do not have the 
invariant structure in (\ref{locrot}). 
Note also that the same pure state can be engineered by using thermal baths. 
The recipe for this case is just (\ref{ex5}) 
but replacing $\mathsf S_r$ by $\mathsf S_{\rm p}$ and using $\bar n = 0$.

An entangled and steerable (with respect to both partitions) 
mixed state 
can also be prepared by following the same recipe.
Suppose that one wants to create the bellow state as a steady-state 
of an OPO-covariant-LE:  
\begin{equation}
{\bf V}' = {\mathsf S}_{\rm p}^{-1} 
          {\bf \Lambda} 
          {\mathsf S}_{\rm p}^{-\top}, 
\end{equation}
with $\mathsf S_{\rm p}$ defined in (\ref{ex6}) and 
\begin{equation}
{\bf \Lambda} = {\rm Diag}\left[1,(1-\epsilon)^{-1},1,(1+\epsilon)^{-1} \right]. 
\end{equation}
This matrix is the solution for the LE 
$\lfloor {\bf \Lambda}, {\bf \Gamma}', {\bf D}' \rceil$ 
with $\bf D'$ in (\ref{opod}) and $\bf \Gamma'$ in (\ref{opog}) both with 
$\kappa = 1$, $\epsilon_1 = 0$ and $\epsilon_2 = \epsilon$. 
By the same recipe as before, the LE in (\ref{engcov}) has the above $\bf V'$ 
as solution if we replace $\mathsf S$ by $\mathsf S_{\rm p}$ and the mentioned 
matrices $\bf \Gamma'$ and $\bf D'$. 

\section{Final Remarks} \label{conc}            
Symmetries and properties of the Lyapunov equation were used to classify the features of 
the steady state of a \hy{LME}{LME} with a quadratic Hamiltonian and linear 
Lindblad operators. 
The connection with the Lyapunov equation eases the characterization of the state, 
a task that is typically difficult when performed using the master equation directly. 

For Gaussian steady-states, we focused on known bona-fide relations for the 
covariance matrix of a state. Specifically, we considered conditions for the 
classicality, separability, and steerability of Gaussian states. 
We remark, however, that the extension for any other bona-fide relation is 
straightforward and can be performed following the lines presented here. 
For instance, we can refer to the characterization of tripartite 
entanglement given in Ref.~\cite{giedke}. 
We also analyze the consequences for the covariance matrix of a steady-state 
when a transformation symmetry of the Lyapunov equation is performed.  

We focused our examples on systems with one or two degrees of freedom,
which has enabled us to compare the results of our corollaries with 
the results extracted directly from the covariance matrix of the system after 
solving the Lyapunov equation.
However, our results are applicable to systems with a generic number of 
degrees of freedom. For large systems, in particular in the absence of symmetries, 
numerical solutions might be needed to find the covariance matrix of the 
steady-state; for instance, the systems considered in Ref. \cite{nicacio3}. 
In this situation, instabilities associated with the algorithms for solving 
Lyapunov equations may arise \cite{hammarling}. 
The robustness of the analytical results shows the advantage with respect to 
either perturbations of the systems parameters or preparation imprecisions. 
In other words, our results are advantageous since one does not need to solve 
a Lyapunov equation to know some of the system properties or symmetries.
In addition, our results are suitable for the engineering of 
(a reservoir leading to a specific) steady-state of a \hy{LME}{LME}
having suitable properties and symmetries. 
%
\section*{Appendices} \appendix
\setcounter{equation}{0}
\renewcommand{\theequation}{A-\arabic{equation}}  
\hypertarget{Appendix}{\subsection*{Appendix I: Notations and Definitions}} \appendix  
\vspace{-0.2cm}
Throughout the text we use some mathematical objects whose notations are defined here.
\begin{itemize}\itemsep1.5pt
%
\item[\textbullet]${\rm Mat}(m,\mathbb K)$: 
                  set of all $m \times m$ square matrices over the field $\mathbb K$.  
\item[\textbullet] ${\rm GL}(m, \mathbb K) := 
                   \{ {\bf M} \in {\rm Mat}(m,\mathbb K) | \det {\bf M} \ne 0\}  $: 
                   Ge\-ne\-ral linear group over field $\mathbb K$. 
\item[\textbullet] ${\rm Sp}(2m, \mathbb R):= 
                   \{ {\mathsf M} \in {\rm Mat}(2m,\mathbb R) | 
                   \mathsf M \mathsf J \mathsf M^\top = \mathsf J \}  $: 
                   Real symplectic group. 
\item[\textbullet] ${\rm O}(m) := \{ M \in {\rm Mat}(m,\mathbb R) | 
                    M M^\top = \mathsf I_{m} \}  $: Real orthogonal group.                   
\item[\textbullet] $\mathsf I_{m}$: Identity matrix in ${\rm Mat}(m,\mathbb K)$. 
\item[\textbullet] ${\bf 0}_m$: Zero matrix in ${\rm Mat}(m,\mathbb K)$.
\item[\textbullet] ${\rm Spec}_{\mathbb K}({\bf M}) := \{\nu_1,...,\nu_l\}$
                   is the spectrum of ${\bf M}\in{\rm Mat}(m,\mathbb K)$.  
                   It is the set of its eigenvalues  
                   $\nu_k \in \mathbb K, \, \forall k$ and $l \le m$. 
\item[\textbullet] ${\bf M}^{\!\top}$: Transposition of $\bf M$; \, 
                   ${\bf M}^{-\top}$:  Inverse of $\bf M^{\!\top}$. 
\item[\textbullet] ${\bf M}^{\ast}$: Complex conjugation of the elements of $\bf M$.  
\item[\textbullet] ${\rm In}({\bf M}) := (n_+,n_0,n_-)(\bf M) $: 
                   Inertia index, {\it i.e.},
                   the triple containing the number of eigenvalues of ${\bf M}$ 
                   with positive ($n_+$), null ($n_0$) and negative ($n_-$) real part. 
                   {\it N.B.} if ${\bf M}\in {\rm Mat}(m,\mathbb K) $, 
                   then $m = n_+ + n_0 + n_-$. 
\end{itemize}
In what follows, ${\bf M}\in {\rm Mat}(m,\mathbb K) $ and ${\bf M} = {\bf M}^\dagger$: 
\begin{itemize}
\item[\textbullet] ${\bf M} > 0 $ (resp. $ {\bf M}< 0$): 
                   Positive (resp. negative) definiteness of ${\bf M}$, 
                   {\it i.e.}, all its eigenvalues are positive (resp. negative). 
\item[\textbullet] ${\bf M} \ge 0$ (resp. $ {\bf M} \le 0$): Positive (resp. negative) 
                   semidefiniteness of ${\bf M}$, {\it i.e.}, 
                   all its eigenvalues are non-negative (resp. nonpositive). 
                   In this text, the statement ${\bf M} \ge 0$ (resp. ${\bf M} \le 0$) 
                   means that $\bf M$ can, but not necessarily, 
                   have null eigenvalues. This is the same as say that the set of 
                   matrices such that ${\bf M}>0$ (resp. ${\bf M} < 0$)
                   is a subset of the ones satisfying 
                   ${\bf M}\ge 0$ (resp. ${\bf M} \le 0$). 
\end{itemize} 

It is noteworthy that, following our definitions, 
the sum of two positive (resp. negative) semidefinite matrices is 
positive (resp. negative) semidefinite, 
{\it i.e.}, the sum will have non-negative (resp. non-positive) eigenvalues. 
In addition, the sum of two positive (resp. negative) definite matrices is 
positive (resp. negative) definite.
\vspace{-0.3cm}
\hypertarget{Appendix2}{\subsection*{Appendix II: On the P-Representability of States}}
\appendix 
\vspace{-0.3cm}
Due to the absence of a proof in the literature, this appendix is devoted to prove 
that the necessary and sufficient condition for P-Representability of a 
$n$-mode Gaussian state is Eq.(\ref{bfclas}).

A quantum state $\hat \rho$ is P-representable, by definition, 
if it can be written as a convex and regular sum of coherent 
states through the Glauber-Sudarshan $P$-function \cite{sudarshan}: 
\begin{equation}                                                                         \label{prep}
\hat \rho = \int \! P(\zeta) \, |\zeta \rangle \! \langle \zeta | \, d^{2n}\zeta, 
\end{equation}
where $\zeta \in \mathbb R^{2n}$ and $|\zeta \rangle$ is a coherent state. 

The sufficient condition is proved in \cite{englert} for two mode Gaussian states, 
{\it i.e.}, $n = 2$ in (\ref{prep}). 
The extension for any $n$-mode state (not only the Gaussians) 
follows the same recipe: using the definition of the \hy{CM}{CM} (\ref{cmdef}) with the 
$\hat \rho$ in (\ref{prep}), Eq.(\ref{bfclas}) follows immediately.

To prove the necessary condition (only for Gaussian states), 
we choose two Gaussian states $\hat \rho$ and $\hat \rho_0$, 
with the respective \hy{CM}{CMs} 
${\bf V}$ and ${\bf V}_0$ such that ${\bf V} \ge {\bf V}_0$. 
These states can be related through a Gaussian noise channel \cite{caves}:    
\begin{equation}                                                                         \label{bosonico}
\hat \rho = \frac{1}{(\pi\hbar)^n} \int^{+ \infty}_{-\infty} 
\frac{ {\rm e}^{-\frac{1}{\hbar} \zeta \cdot \Delta^{-1} \zeta} } 
     { \sqrt{\rm Det \Delta} } \,\, 
\hat T_\zeta \hat\rho_0 \hat T_\zeta^\dag \,\,\, d^{2n} \zeta.                                               
\end{equation}
The operators $\hat T_\zeta$ are the Weyl displacement operators \cite{gosson}, 
and $\Delta := {\bf V} - {\bf V}_0 \ge 0$. 
Mathematically speaking, Eq.(\ref{bosonico}) express the very known fact that 
the convolution of two Gaussian functions is a Gaussian function. 
If we choose ${\bf V}_0 = \mathsf I_{2n}$ and 
$\hat \rho_0$ as a vacuum state, the positive-semidefiniteness of $\Delta$
implies relation (\ref{bfclas}), as we should prove. 

\vspace{-0.2cm}
\acknowledgments         
FN acknowledges the warm hospitality of the CTAMOP at Queen's University Belfast. 
Insightful discussions with A. Xuereb at the beginning of the work and with F. Semi\~ao 
through the writing of the whole work were valuable.
FN and MP are supported by the CNPq ``Ci\^{e}ncia sem Fronteiras'' 
programme through the ``Pesquisador Visitante Especial'' initiative 
(Grant No. 401265/2012-9).
MP acknowledges financial support from the UK EPSRC (EP/G004579/1). 
MP and AF are supported by 
the John Templeton Foundation (grant ID 43467), and the EU Collaborative Project TherMiQ 
(Grant Agreement 618074). 
MP gratefully acknowledge support from  the COST Action MP1209 
``Thermodynamics in the quantum regime".
%


\begin{thebibliography}{99}          
\bibitem{poyatos}
J.F. Poyatos, J.I. Cirac \& P. Zoller, 
{\it Quantum reservoir engineering with laser cooled trapped ions}, 
\href{http://dx.doi.org/10.1103/PhysRevLett.77.4728}
     {Phys. Rev. Lett. {\bf 77}, 4728 (1996)},
%
A.R.R. Carvalho,  P. Milman,  R.L. de Matos Filho \& L. Davidovich, 
{\it Decoherence, pointer engineering and quantum state protection} 
(Modern Challenges in Quantum Optics, Springer, 2001).
%
M.B. Plenio \& S.F. Huelga, 
{\it Entangled light from white noise}, 
\href{http://dx.doi.org/10.1103/PhysRevLett.88.197901}
     {Phys. Rev. Lett. {\bf 88}, 197901 (2002)},
\href{http://arxiv.org/pdf/quant-ph/0110009v2.pdf}{arXiv:quant-ph/0110009}.
%
S. Diehl, A. Micheli, A. Kantian, B. Kraus, H.P. B{\"u}chler \& P. Zoller, 
{\it Quantum states and phases in driven open quantum systems with cold atoms}, 
\href{http://dx.doi.org/10.1038/nphys1073}{Nat. Phys. {\bf 4}, 878 (2008)},
\href{https://arxiv.org/pdf/0803.1482v1.pdf}{arXiv: 0803.1482 [quant-ph]};
%
F. Verstraete, M.M. Wolf \& J.I. Cirac, 
{\it Quantum computation and quantum-state engineering driven by dissipation}, 
\href{http://dx.doi.org/10.1038/nphys1342}{Nat. Phys. {\bf 5}, 633 (2009)},
\href{https://arxiv.org/pdf/0803.1447v2.pdf}{arXiv:0803.1447 [quant-ph]}. 
\bibitem{ticozzi}
F. Ticozzi, S.G. Schirmer \& X. Wang, 
{\it Stabilizing Quantum States by Constructive Design of Open Quantum Dynamics}, 
\href{http://dx.doi.org/10.1109/TAC.2010.2079532}
     {IEEE Trans. Autom. Cont. {\bf 55}, 2901 (2010)}, 
\href{http://arxiv.org/pdf/0911.4156.pdf}{arXiv:0911.4156 [quant-ph]}.
\bibitem{albert}
V.V. Albert \& L. Jiang,     %
{\it Symmetries and conserved quantities in Lindblad master equations},
\href{http://dx.doi.org/10.1103/PhysRevA.89.022118}
     {Phys. Rev. A {\bf 89}, 022118 (2014)},
\href{http://arxiv.org/pdf/1310.1523v2.pdf}{arXiv:1310.1523 [quant-ph]}.
\bibitem{wiseman}
H.M. Wiseman \& G.J. Milburn, 
{\it Quantum Measurement and Control} 
(Cambridge University Press, New York, 2009).
\bibitem{lindblad}
G. Lindblad,
{\it On the Generators of Quantum Dynamical Semigroups},
\href{http://dx.doi.org/10.1007/BF01608499}
     {Commun. Math. Phys. {\bf 48} (2), 119 (1976)}.
\bibitem{breuer}
H.-P. Breuer \& F. Petruccione, 
{\it Theory of Open Quantum Systems} 
(Oxford University Press, New York, 2002).
\bibitem{albert2}
A good and extensive set of references 
for applications can be found in Ref.\cite{albert}.  
\bibitem{lloyd}
S. Lloyd \& S.L. Braunstein, 
{\it Quantum Computation over Continuous Variables},
\href{http://dx.doi.org/10.1103/PhysRevLett.82.1784}
     {Phys. Rev. Lett. {\bf 82}, 1784 (1999)}, 
\href{http://arxiv.org/pdf/quant-ph/9810082v1.pdf}{arXiv: quant-ph/9810082}.
\bibitem{dullerud}
G.E. Dullerud \& F.G. Paganini, 
{\it A Course in Robust Control Theory - A Convex Approach}
(Springer-Verlag, New York, 2000).
\bibitem{lyapunov} 
A.M. Lyapunov, 
{\it The General Problem of Stability of Motion}
(Taylor \& Francis, London, 1992).
\bibitem{koga}
K. Koga \& N. Yamamoto,    %
{\it Dissipation-induced pure Gaussian state},
\href{http://dx.doi.org/10.1103/PhysRevA.85.022103}{Phys. Rev. A {\bf 85}, 022103 (2012)},
\href{http://arxiv.org/pdf/1103.5449v4.pdf}{arXiv:1103.5449 [quant-ph]}. 
\bibitem{englert}
B.-G. Englert \& K. W\'odkiewicz, 
{\it Separability of two-party Gaussian states},
\href{http://dx.doi.org/10.1103/PhysRevA.65.054303}
     {Phys. Rev. A {\bf 65}, 054303 (2002)}.
\href{http://arxiv.org/pdf/quant-ph/0107131v1.pdf}{arXiv:quant-ph/0107131}.
\bibitem{simon2}
R. Simon, 
{\it Peres-Horodecki Separability Criterion for Continuous Variable Systems},
\href{http://dx.doi.org/10.1103/PhysRevLett.84.2726}
     {Phys. Rev. Lett. {\bf 84}, 2726 (2000)}, 
\href{http://arxiv.org/pdf/quant-ph/9909044v1.pdf}{arXiv:quant-ph/9909044}.
\bibitem{werner}
R.F. Werner \& M.M. Wolf, 
{\it Bound Entangled Gaussian States},
\href{http://dx.doi.org/10.1103/PhysRevLett.86.3658}
     {Phys. Rev. Lett. {\bf 86}, 3658 (2001)},
\href{http://arxiv.org/pdf/quant-ph/0009118v1.pdf}{arXiv:quant-ph/0009118}.
\bibitem{wiseman2}
H.M. Wiseman, S.J. Jones and A.C. Doherty,    %
{\it Steering, Entanglement, Nonlocality, and the Einstein-Podolsky-Rosen Paradox},
\href{http://dx.doi.org/10.1103/PhysRevLett.98.140402}
     {Phys. Rev. Lett. {\bf 98}, 140402 (2007)},
\href{http://arxiv.org/pdf/0709.0390v2.pdf}{arXiv:0709.0390 [quant-ph]}.
\bibitem{nicacio} 
F. Nicacio, R.N.P. Maia, F. Toscano \& R.O. Vallejos, 
{\it Phase space structure of generalized Gaussian cat states},
\href{http://dx.doi.org/10.1016/j.physleta.2010.08.076}
     {Phys. Lett. A {\bf 374}, 4385 (2010)},
\href{http://arxiv.org/pdf/1002.2248v1.pdf}{arXiv:1002.2248 [quant-ph]}.
\bibitem{note1}
Our definition of the \hy{CM}{CM} differs from the standard one \cite{simon} 
and is related to the choice of the \hy{CCR}{CCR} 
$[\hat x_j,\hat x_k] = i\hbar \mathsf J_{jk}$. 
This convenience is to avoid some undesired multiplicative factors, {\it e.g.},
for a pure state $\det {\bf V} = 1$ in our notation. 
\bibitem{horn} 
R. A. Horn \& C. R. Johnson, 
{\it Topics in Matrix Analysis}
(Cambridge University Press, New York, 1994).
\bibitem{simon}
R. Simon, N. Mukunda \& B. Dutta, 
{\it Quantum-noise matrix for multimode systems: 
     $U(n)$ invariance, squeezing, and normal forms},
\href{http://dx.doi.org/10.1103/PhysRevA.49.1567}{Phys. Rev. A {\bf 49}, 1567 (1994)}. 
\bibitem{gosson}
M. de Gosson, 
{\it Symplectic Geometry and Quantum Mechanics} 
(Birkh\"auser, Basel, series ``Operator Theory: Advances and Applications", 2006).
\bibitem{nicacio2}
F. Nicacio and F. L. Semi\~ao, 
{\it Coupled Harmonic Systems as Quantum Buses in Thermal Environments}, 
\href{http://dx.doi.org/10.1088/1751-8113/49/37/375303}
     {J. Phys. A: Math. Theor. {\bf 49}, 375303 (2016)},
\href{http://arxiv.org/pdf/1601.07528v1.pdf}{arXiv:1601.07528 [quant-ph]}.
\bibitem{footnote2}
As stated in Ref.\cite{nicacio2}, 
if we consider the dynamics of Hamiltonian (\ref{hess1}), 
move to an interaction picture with respect to the free oscillators 
evolutions with frequencies $\omega_1$ and $\omega_2$, 
perform the RWA (discarding the fast oscillation terms) 
and back again to Schr\"odinger picture, we obtain Eq.(\ref{hess2}) with 
$ \varpi_i = \omega_i + \tfrac{\kappa}{4}$ and $ \Omega = \tfrac{\kappa}{4} $. 
\bibitem{assadian}
A. Asadian, D. Manzano, M. Tiersch \& H.J. Briegel, 
{\it Heat transport through lattices of quantum harmonic 
     oscillators in arbitrary dimensions},
\href{http://dx.doi.org/10.1103/PhysRevE.87.012109}
     {Phys. Rev. E {\bf 87}, 012109 (2013)}, 
\href{http://arxiv.org/pdf/1204.0904v1.pdf}{arXiv:1204.0904 [quant-ph]}.
\bibitem{tan}
H. Tan, L.F. Buchmann, H. Seok \& G. Li, 
{\it Achieving ste\-ady-state entanglement of remote 
     micromechanical oscillators by cascaded cavity coupling}, 
\href{http://dx.doi.org/10.1103/PhysRevA.87.022318}
     {Phys. Rev. A {\bf 87}, 022318 (2013)},     
\href{http://arxiv.org/pdf/1210.2345v1.pdf}{ar\-Xiv:1210.2345 [quant-ph]};
%
H. Tan, G. Li \& P. Meystre, 
{\it Dissipation-driven two mode mechanical squeezed states in optomechanical systems},
\href{http://dx.doi.org/10.1103/PhysRevA.87.033829}
     {Phys. Rev. A {\bf 87}, 033829 (2013)},     
\href{https://arxiv.org/pdf/1301.5698v1}{arXiv:1301.5698 [quant-ph]};
%
M.J. Woolley, A.A. Clerk
{\it Two-mode squeezed states in cavity optomechanics 
     via engineering of a single reservoir},
\href{http://dx.doi.org/10.1103/PhysRevA.89.063805}
     {Phys. Rev. A {\bf 89}, 063805 (2014)},
\href{https://arxiv.org/pdf/1404.2672v3.pdf}{arXiv: 1404.2672 [quant-ph]};
%
M. Abdi, M.J. Hart\-mann, 
{\it Entangling the motion of two optically trapped 
     objects via time-modulated driving fields},
\href{http://dx.doi.org/10.1088/1367-2630/17/1/013056}
     {New J. Phys. {\bf 17}, 013056 (2015)},
\href{http://arxiv.org/pdf/1408.3423v1.pdf}{arXiv: 1408.3423 [quant-ph]};
%
Y.-D. Wang \& A.A. Cle\-rk, 
{\it Reser\-voir-engineered entanglement in opto\-me\-cha\-ni\-cal sys\-tems},
\href{http://dx.doi.org/10.1103/PhysRevLett.110.253601}
     {Phys. Rev. Lett. {\bf 110}, 253601 (2013)},     
\href{http://arxiv.org/pdf/1301.5553v2.pdf}{arXiv:1301.5553 [cond mat. mes-hall]}.
\bibitem{houhou}
O. Houhou, H. Aissaoui, A. Ferraro, 
{\it Generation of cluster states in optomechanical quantum systems}, 
\href{http://dx.doi.org/10.1103/PhysRevA.92.063843}
     {Phys. Rev. A {\bf 92}, 063843 (2015)},
\href{https://arxiv.org/pdf/1508.02264v2.pdf}
     {arXiv:1508.02264 [quant-ph]}.
\bibitem{ikeda}
Y. Ikeda \& N. Yamamoto, 
{\it Deterministic generation of Gaussian pure state in quasi-local dissipative system},
\href{http://dx.doi.org/10.1103/PhysRevA.87.033802}
     {Phys. Rev. A {\bf 87}, 033802 (2013)}, 
\href{https://arxiv.org/pdf/1211.5788v2.pdf}{arXiv:1211.5788 [quant-ph]}.
\bibitem{krauter}
H. Krauter, C.A. Muschik, K. Jensen, W. Wasilewski, 
J.M. Petersen, J.I. Cirac \& E.S. Polzik, 
{\it Entanglement generated by dissipation and steady 
     state entanglement of two macroscopic objects},
\href{http://dx.doi.org/10.1103/PhysRevLett.107.080503}
     {Phys. Rev. Lett. {\bf 107}, 080503 (2011)},
\href{https://arxiv.org/pdf/1006.4344v3.pdf}{arXiv:1006.4344 [quant-ph]};
%
C.A. Muschik, E.S. Polzik \& J.I. Cirac,
{\it Dissipatively driven entanglement of two macroscopic atomic ensembles},
\href{http://dx.doi.org/10.1103/PhysRevA.83.052312}
     {Phys. Rev. A {\bf 83}, 052312 (2011)},
\href{https://arxiv.org/pdf/1007.2209v1.pdf}
     {arXiv:1007.2209 [quant-ph]};
%
C.A. Muschik, H. Krauter, K. Jensen, J.M. Petersen, J.I. Cirac \& E.S. Polzik, 
{\it Robust entanglement generation by reservoir engineering}, 
\href{http://dx.doi.org/10.1088/0953-4075/45/12/124021}
     {J. Phys. B: At. Mol. Opt. Phys. {\bf 45}, 124021 (2012)},
\href{https://arxiv.org/pdf/1203.4785v1.pdf}{arXiv:1203.4785 [quant-ph]}.
\bibitem{williamson}
J. Williamson, 
{\it An algebraic problem involving the 
     involutory integrals of linear dynamical systems},
\href{http://dx.doi.org/10.2307/2371497}{Amer. J. Math. {\bf 58}, 141 (1936)}.
\bibitem{footnote3}
This affirmation is true for any pure states, {\it i.e.}, for any \hy{CM}{CM} of the form
${\bf V} = \mathsf S \mathsf S^\top$, $\mathsf S \in {\rm Sp}(2n,\mathbb R)$.  
\bibitem{giedke}
G. Giedke, B. Kraus, M. Lewenstein, \& J.I. Cirac,  
{\it Separability properties of three-mode Gaussian states}, 
\href{http://dx.doi.org/10.1103/PhysRevA.64.052303}
     {Phys. Rev. A {\bf 64}, 052303 (2001)}, 
\href{https://arxiv.org/pdf/quant-ph/0103137v2.pdf}{arXiv:quant-ph/0103137}.
\bibitem{nicacio3}
F. Nicacio, A. Ferraro, A. Imparato, M. Paternostro \& F.L. Semi\~ao, 
{\it Thermal transport in out of equilibrium quantum harmonic chains},
\href{http://dx.doi.org/10.1103/PhysRevE.91.042116}
     {Phys. Rev. E {\bf 91}, 042116 (2015)},     
\href{https://arxiv.org/pdf/1410.7604v2.pdf}{arXiv:1410. 7604 [quant-ph]}.  
\bibitem{hammarling}
S.J. Hammarling, 
{\it Numerical Solution of the Stable, 
     Non-negative Definite Lyapunov Equation Lyapunov Equation}, 
\href{http://dx.doi.org/10.1093/imanum/2.3.303}{IMA J. Numer. Anal. {\bf 2}, 303 (1982)};
%
P. Benner, J.-R. Li \& T. Penzl, 
{\it Numerical solution of large-scale Lyapunov equations, Riccati
equations, and linear-quadratic optimal control problems}, 
\href{http://dx.doi.org/10.1002/nla.622}
     {Numer. Linear Algebra Appl. {\bf 15}, 755  (2008)}.
\bibitem{sudarshan}
E.C.G. Sudarshan, 
{\it Equivalence of Semiclassical and Quantum Mechanical 
     Descriptions of Statistical Light Beams}, 
\href{http://dx.doi.org/10.1103/PhysRevLett.10.277} 
     {Phys. Rev. Lett. {\bf 10}, 277(1963)}; 
%
R.J. Glauber, 
{\it Coherent and Incoherent States of the Radiation Field},
\href{http://dx.doi.org/10.1103/PhysRev.131.2766}{Phys. Rev. {\bf 131}, 2766 (1963)}.
\bibitem{caves}
C.M. Caves \& K. Wodkiewicz, 
{\it Fidelity of Gaussian Channels},
\href{http://dx.doi.org/10.1007/s11080-004-6621-7}
     {Open Sys. Inf. Dyn. {\bf 11}, 309 (2004)}, 
\href{http://arxiv.org/pdf/quant-ph/0409063v1.pdf}{arXiv:quant-ph/0409063}.  
\end{thebibliography}
\end{document}